\documentclass[12pt,english]{article}
\newif\ifanon
\anonfalse

\usepackage[T1]{fontenc}
\usepackage[latin9]{inputenc}
\usepackage{verbatim}
\usepackage{booktabs}
\usepackage{amsmath}
\usepackage{amsthm}
\usepackage{amssymb}
\usepackage{graphicx}
\usepackage{geometry}
\usepackage{setspace}
\usepackage[authoryear]{natbib}
\makeatletter

\providecommand{\tabularnewline}{\\}

\theoremstyle{definition}

\theoremstyle{plain}
\newtheorem{prop}{\protect\propositionname}
\theoremstyle{plain}

\theoremstyle{plain}
\newtheorem{lem}{\protect\lemmaname}
\theoremstyle{plain}

\theoremstyle{plain}
\newtheorem{thm}{\protect\theoremname}
\theoremstyle{definition}
 \newtheorem{example}{\protect\examplename}
\newtheorem{assump}{\protect\assumptionname}

\usepackage{graphicx}
\usepackage[usenames,dvipsnames]{color}
\definecolor{ChadDarkBlue}{rgb}{.1,0,.2}
\definecolor{ChadBlue}{rgb}{.1,.1,.5}
\definecolor{ChadRoyal}{rgb}{.2,.2,.8}
\definecolor{ChadGreen}{rgb}{0,.4,0}    
\definecolor{ChadRed}{rgb}{.5,0,.5}  

\definecolor{DarkerGreen}{rgb}{0,.35,0}    
\definecolor{ChenBlack}{rgb}{0,0,0}    
\definecolor{ChenGreen}{rgb}{0,.30,0}    

\definecolor{ChenBlue}{rgb}{0,0,.5}
\definecolor{ChenRed}{rgb}{0.6,0,0}
\definecolor{ChenGreen}{rgb}{0,.6,0}    

\usepackage{booktabs,bigstrut,multirow,tabularx,caption,subcaption, threeparttable, rotating, lscape}
\usepackage{appendix}

\usepackage{titlesec}
\titlelabel{\thetitle. \quad}
\titleformat*{\section}{\normalfont\Large\bfseries\color{ChadBlue}}
\titleformat*{\subsection}{\normalfont\large\bfseries\color{ChadBlue}}
\titleformat*{\subsubsection}{\normalfont\normalsize\bfseries\color{ChadBlue}}
\titleformat*{\paragraph}{\normalfont\normalsize\bfseries\color{ChadBlue}}

\usepackage{hyperref} 
\hypersetup{hidelinks}
 \hypersetup{
     colorlinks=false,
     linkbordercolor=[rgb]{0,0.8,0},
     citebordercolor=[rgb]{1,1,1},
     urlbordercolor =[rgb]{1,1,1},
     }

\usepackage{pdfsync}

\usepackage{indentfirst}
\usepackage[shortcuts]{extdash}
\usepackage{fourier}

\renewcommand\[{\begin{equation}}
\renewcommand\]{\end{equation}}
\renewenvironment{align*}{\align}{\endalign}

\makeatother

\usepackage{babel}
\providecommand{\conjecturename}{Conjecture}
\providecommand{\corollaryname}{Corollary}
\providecommand{\definitionname}{Definition}
\providecommand{\examplename}{Example}
\providecommand{\lemmaname}{Lemma}
\providecommand{\propositionname}{Proposition}
\providecommand{\theoremname}{Theorem}
\providecommand{\assumptionname}{Assumption}
\providecommand{\resultname}{Result}

\begin{document}

\title{\textbf{\color{ChadBlue}An Irrelevance Theorem for Risk Aversion and Time-Varying Risk}}
\ifanon
    \date{}
\else
    \author{\large {Andrew Y. Chen}\\{\normalsize Federal Reserve Board} \and \large{Francisco Palomino}\\{\normalsize Federal Reserve Board}}
    \date{April 2026\thanks{We thank Anthony Diercks, Mohammad Jahan-Parvar, Jae Sim, Hiro Tanaka,
    Paco Vazquez-Grande, and seminar participants at the Federal Reserve
    Board, 2018 Latin American Meetings of the Econometric Society, and
    2018 Southern Economic Association Meetings for helpful comments.
    The views expressed herein are those of the authors and do not necessarily
    reflect the position of the Board of Governors of the Federal Reserve
    or the Federal Reserve System.}}
\fi
\maketitle

\thispagestyle{empty}

\begin{abstract}
\noindent We provide a theorem on the role of risk and risk attitudes in macroeconomic models that clarifies and extends the Tallarini (2000) separation result. Under (1) separation of intertemporal and risk preferences, (2) separation of drivers of first and higher moments in the model primitives, and (3) approximate linearity of constraints, risk aversion and time-varying risk are irrelevant for the elasticity of any endogenous variable with respect to state variables that don't drive variation in higher moments.  We discuss how models generate a more prominent role for risk by ``breaking'' or ``adapting'' to the assumptions in the theorem.
\end{abstract}
\noindent\textbf{JEL Classification}: D50, D81, E1, G12.

\noindent\textbf{Keywords}: equilibrium models, asset pricing, macroeconomics,
risk and uncertainty.

\vspace{5cm}

\pagebreak{}

\setcounter{page}{1}

\section{Introduction}

Models that incorporate advances in specifying risk attitudes and risk dynamics, such as exotic preferences, long run risk, or disaster risk, have deepened our understanding of asset prices in endowment economies. But in standard real business cycle (production) economies, despite the separation identified in \citet{tallarini2000risk}, these advances struggle to generate enough action from risk and simultaneously match salient macroeconomic and financial empirical regularities.\footnote{
Exotic preferences, as the ones described in \citet{backus2004exotic} and \citet{cochrane2017macro}, include recursive preferences (\citet*{epstein1989substitution}; \citet{weil1990nonexpected}) and external habit (\citet*{campbell1999force}).  Endowment economy versions of long-run risks  (\citet*{bansal2004risks}) and time-varying disaster risk (\citet{gabaix2012variable}, \citet{wachter2013can}) match numerous equity price facts, but production economy versions of long-run risks (\citet{kaltenbrunner2010long}, \citet{croce2014long}) and disaster risk (\citet{gourio2012disaster}) produce tiny equity price volatility, unless an elevated degree of financial leverage is included.} %
A much richer economic structure is needed in production economies, such as nominal rigidities in \citet{basu2017uncertainty}, or market incompleteness in \citet{DiTellaHall2022}, to give risk and risk attitudes a more prominent role in describing observed economic dynamics.%

We provide a theorem that sheds light on this phenomenon. We write down a representative agent model with \citet{epstein1989substitution} (EZ) recursive preferences and a general, abstract structure. EZ preferences separate attitudes towards intertemporal substitution and risk aversion. The remaining economic structure assumes a similar separation: transformation of resources across time is separated from transformation across future states of nature. 

We use the terms ``first moment states'' and ``higher moment states'' to describe this separation. First moment states are variables that control expected values of the model structure, while higher moment states control higher central moments. For example, in neoclassical models, capital are productivity are first moment states, while the volatility of productivity is a higher moment state. 

In the real world, first and higher moments states are intertwined---production decisions involve risk-return tradeoffs. But in standard models, perhaps for tractability, first moment and higher moment states are often separated. 

Our theorem states that, if preferences and the remaining model structure exhibit these separations, and this structure is approximately linear, then risk aversion and time-varying risk are irrelevant for the elasticity of any variable with respect to first moment states. 

The theorem implies that augmenting a stochastic growth model with exogenous shocks to volatility or risk aversion has little effect on impulse responses with respect to productivity, government policy, or expected time preference. Indeed, the impulse responses with respect to these first moment states are well approximated by solving a variation of the model with no uncertainty at all. Including nominal rigidities or other frictions does not affect this irrelevance, unless these frictions connect first and higher moment states, or generate strong nonlinearities.

The intuition is found in the Euler equation. Separation in preferences implies that the Euler equation can be decomposed into a piece that characterizes intertemporal tradeoffs and a piece that characterizes risk tradeoffs. In principle, these tradeoffs interact through the remaining structure, such as the production technology. But if the remaining structure also exhibits separation, then these direct interactions are ruled out. Still, indirect interactions could occur through nonlinearities, but in many models the structure is approximately linear. As a result, the intertemporal piece of the Euler equation completely determines elasticities with respect to first-moment states, and for these elasticities, risk tradeoffs are irrelevant. In short, the understandable desire to separate economic channels leads to a stark separation in equilibrium conditions, and the irrelevance of risk for certain dynamics.

In the special case that shocks are homoskedastic, the theorem implies that risk aversion has no effect on the impulse response for \emph{any} variable, regardless of whether the variable is macroeconomic or financial. This special case generalizes and clarifies the \citet{tallarini2000risk} separation result, which is widely understood as a separation between drivers of macroeconomic and financial dynamics (e.g., \citealt{lucas2003macroeconomic}, \citealt{cochrane2017macro}).

Instead, our theorem shows the separation is actually between means and volatilities. Indeed, though risk aversion has no effect on elasticities under homoskedasticity, we show that it has clear effects on model intercepts and steady states. A taste of this result is found in \citet{tallarini2000risk}, who shows that risk aversion determines the average equity premium but has no effect on risk-free rate volatility. A similar result is seen for bond price dynamics in \citet*{rudebusch2012bond} and stock price dynamics in \citet*{petrosky2018endogenous}. These models span a wide variety of economic structures, from New-Keynesian to labor search, illustrating the scope of our theorem.

If shocks are heteroskedastic, the theorem's implications are less stark. In this case, risk aversion is critical for elasticities with respect to higher moment states, and thus key model dynamics. 

But combined with the \citet*{barro1984time} comovement result, the theorem helps explain the difficulties of modeling asset prices in standard production economies. Barro and King show that, in the neoclassical structure, many shocks fail to generate the observed comovement of consumption, investment, and labor.%
\footnote{%
We derive this result under recursive preferences in Appendix \ref{subsec:app_barro_king}. 
} %
This result has little impact on endowment economy models, in which the comovement of consumption and investment are ignored. But in neoclassical production economies, Barro-King implies that the magnitude of higher moment shocks must be limited. Indeed, production economy versions of asset pricing models often drive business cycles with traditional productivity shocks (e.g., \citealt{gourio2012disaster}, \citealt{croce2014long}). Our theorem implies that the impulse responses with respect to productivity shocks are essentially the same as in models with no risk at all. This result suggests that modeling asset prices in production economies requires either addressing business cycle comovement, or removing the assumptions for our theorem.

We show that a similar irrelevance holds for a broader family of preferences such as smooth ambiguity (\citealt*{klibanoff2005smooth}) or multiplier preferences (\citealt*{hansen2001robust}). Like EZ preferences, these preferences also aim to separate intertemporal substitution from other dimensions of preferences.

To address concerns about the approximate linearity assumption, we examine irrelevance in models solved with projection methods (\citealt{caldara2012computing}). We find that irrelevance holds in these solutions, even for a risk aversion of 100 or a volatility of volatility that is 80\% of the baseline volatility.

\paragraph{Breaking and Adapting to Irrelevance}

Our theorem characterizes when, and in what way, risk modeling is irrelevant. It thus provides a framework for understanding how to generate a meaningful role for risk in model dynamics. This can be done by either ``breaking'' irrelevance (as in \citealt{DiTellaHall2022}) or ``adapting'' to it (as in \citealt{basu2017uncertainty}).

By breaking irrelevance, we mean removing one of the three key assumptions of the theorem. One can remove (1) the separation of intertemporal and risk attitudes (as in \citealt{campbell1999force}), (2) the separation of first and higher moments in the remaining structure (as in \citealt{van2006learning}; \citealt{decker2016market}; \citealt{colacito2014bkk}; \citealt{DiTellaHall2022}), or (3) the approximate linearity of the constraints (as in \citealt{brunnermeier2014macroeconomic}; \citealt{GourioNgo2020ZLB}). All three involve non-trivial modifications of standard models. But all three can have compelling economic motivations. Indeed, incorporating risk-return tradeoffs in production is quite natural.

Alternatively, one can adapt to irrelevance: stay within the class of models covered by the theorem, but drive dynamics with higher moment states.  Examples of adapting to irrelevance include \citet{basu2017uncertainty}, who model shocks to volatility and maintain comovement with nominal rigidities, and \citet{Basu2024Risky}, who model shocks to risk aversion but maintain comovement with a novel investment reallocation channel. 

\paragraph{Related Literature}

\citet*{backus2015risk} is the closest to our paper in providing irrelevance results for several models of time-varying risk and time-varying ambiguity. However, as is common in the literature, they misinterpret the Tallarini property as a separation between the dynamics of quantities and prices. We extend their results into an abstract model with an arbitrary structure and an arbitrary set of shocks. This generality allows us to clarify that the separation is between means and volatilities.

Our theorem builds on risk-adjusted affine approximations (\citealt{jermann1998asset}, \citealt{malkhozov2014asset}, and \citealt*{lopez2018risk}). These approximations capture higher order terms produced by perturbation methods and provide accurate solutions to models with habits or time-varying disaster risk. Indeed, special cases of our irrelevance result are seen in the second and third order perturbation solutions of \citet*{schmitt2004solving} and \citet{van2012term}. This accuracy stems in part from modeling non-Gaussian shocks using entropy, which has been shown to effectively summarize asset pricing properties of pricing kernels (\citealt{backus2014sources}).

Section \ref{sec:ezmodel} uses a two-period model to illustrate irrelevance and discuss implications. Section \ref{sec:tallarini_extension} examines a dynamic model to address asset price volatility and \citet{tallarini2000risk} separation. The main theorem is found in Section \ref{sec:Model}. Section \ref{sec:Numerical-Verification} examines nonlinearities with projection methods. Section \ref{sec:Conclusion} concludes.

\section{Irrelevance Intuition and Implications\label{sec:ezmodel}}

We illustrate the economics behind the irrelevance result with a two-period model.

A household has \citet{epstein1989substitution} utility over consumption for two periods, given by
\begin{align}
    V(c, c_1) =  \exp\left(\left(1-\frac{1}{\psi}\right)c\right)+\exp\left(\left(1-\frac{1}{\psi}\right)
    w_\gamma\left(c_{1}\right)
    \right)
    \label{eq:twoperiod_utility}
\end{align}
where $c$ and $c_1$ are log consumption this period and next period, $\psi$ is the elasticity of intertemporal substitution (EIS) parameter, $w_\gamma(c_{1})$ is the certainty equivalent of $c_1$ defined as
\begin{align}
    w_\gamma\left(c_{1}\right)
    &=\frac{1}{1-\gamma}\log \mathbb{E}\left[\exp\left((1-\gamma)c_{1}\right)\right]
    ,
    \label{eq:twoperiod_g_def}
\end{align}
and $\gamma$ is the risk aversion parameter.\footnote{Throughout the paper, we use lowercase to indicate logs\textemdash that is, $x_{t}\equiv\log X_{t}$.} 

Production follows a linear approximation of a simplified neoclassical model with a capital-only production function and no depreciation:
\begin{align}\label{eq:twoperiod_technology_1}
    c   &= x + \alpha k - i \\
    k_1  &= k + i \label{eq:twoperiod_technology_2}
    \\
    c_1  &= x + \varepsilon_1 + \alpha k_{1} \label{eq:twoperiod_technology_3}
    \\
    r_1  &= x + \varepsilon_1 - (1-\alpha)k_1 \label{eq:twoperiod_technology_4}
    \\    
    \varepsilon_1 &\sim N(0, \sigma^2)
    \label{eq:twoperiod_technology_5} 
\end{align}
where $\alpha$ is production curvature, $k$ is log capital, $i$ is log investment, $r_1$ is the log return on capital, $x$ is log productivity. $\varepsilon_1$ is a productivity shock and $\sigma$ is its volatility.  To understand the linearization, approximate the standard resource constraint following \citet{uhlig1999toolkit}:
\begin{align}
    C &= X K^{\alpha} - I\\
    \Rightarrow \bar{C}(1+c)
      &\approx \bar{X}\bar{K}^{\alpha}\left(1+x+\alpha k\right) 
        - \bar{I}\left(1+i\right)
      \\
    \Rightarrow \bar{C}c 
      &\approx \bar{X}\bar{K}^{\alpha}\left(x+\alpha k\right) 
        - \bar{I}i
        + \text{Constant Terms},
        \label{eq:twoperiod_uhlig}
\end{align}
and then, for ease of exposition, setting approximation points to 1 (with appropriate rescaling of variables) and dropping Constant Terms yields Equation \eqref{eq:twoperiod_technology_1}.

A planner maximizes Equation \eqref{eq:twoperiod_utility} subject to production and resource constraints. The first-order conditions lead to the Euler equation:
\begin{align}
    0	&=\log \mathbb{E}[\exp\left(m_{1}+r_{1}\right)]\label{eq:twoperiod_euler}, \\
    \text{where} \quad m_{1}	&= 
      -\frac{1}{\psi}\Delta c_{1}
      +\left(\frac{1}{\psi}-\gamma\right)\left[
        c_{1}-w_\gamma\left(c_{1}\right)
    \right]
    \label{eq:twoperiod_sdf}
\end{align}
is the log intertemporal marginal rate of substitution of consumption, also known as the log stochastic discount factor (SDF) for the economy, and $\Delta c_1 = c_1 - c$. This SDF is the  \citet{epstein1989substitution} SDF, with the value function next period replaced by consumption next period. The Euler Equation \eqref{eq:twoperiod_euler}, combined with the utility specification (Equations \eqref{eq:twoperiod_utility} and \eqref{eq:twoperiod_g_def}) and the technological constraints (Equations \eqref{eq:twoperiod_technology_1} - \eqref{eq:twoperiod_technology_5}) are the equilibrium conditions.

\subsection{Irrelevance in the Two-Period Model}\label{sec:twoperiod:irrelevance}

EZ preferences separate attitudes towards substitution across time (Equation \eqref{eq:twoperiod_utility}) from attitudes towards substitution across states in the future (Equation \eqref{eq:twoperiod_g_def}), or risk aversion. This separation, in turn, leads to separation of the Euler equation. To see this, use the following property of the certainty equivalent:
\begin{align}
    w_\gamma \left(c_{1}\right) = \mathbb{E}\left[c_1\right] + w_\gamma\left(\Delta \mathbb{E}_{1}c_{1}\right)
\end{align}
where $\Delta \mathbb{E}_{1}$ is the innovation operator ($\Delta \mathbb{E}_{1}c_1 \equiv c_1 - \mathbb{E}[c_1]$). Applying this property to Equation \eqref{eq:twoperiod_euler} yields:
\begin{align}
    0	&= \underbrace{\mathbb{E}\left[-\psi^{-1}\Delta c_1+r_1\right]}_{\text{Intertemporal Term}}
     + \underbrace{
        \left(\frac{1}{\psi}-\gamma\right)w_\gamma\left(\Delta \mathbb{E}_{1}c_{1}\right)
    -\log \mathbb{E}\left[\exp\left(-\gamma\Delta \mathbb{E}_{1}c_{1}+\Delta \mathbb{E}_{1}r_{1}\right)\right]
     }_{\text{Risk Term}}
     .
    \label{eq:twoperiod_euler_separated}
\end{align}
Equation \eqref{eq:twoperiod_euler_separated} separates the Euler equation into an Intertemporal Term and a Risk Term. 

The Intertemporal Term characterizes expected tradeoffs. The household has preferences over the smoothness of the consumption profile over time, in expectation, that determine the tradeoff of consumption today for expected consumption tomorrow given the expected return on capital. The strength of this preference is driven by the EIS, captured by $\psi$. 

Risk aversion, by definition, does not enter in this intertemporal tradeoff. Instead, it enters through the Risk Term, which captures tradeoffs across \emph{un}expected states in the future. With EZ preferences, risk aversion is fully captured by $\gamma$.

In principle, the Intertemporal and Risk Terms interact through the production technology. In the real world, producing goods involves risk-return tradeoffs. But in standard models, transforming resources across time (Equations \eqref{eq:twoperiod_technology_1}-\eqref{eq:twoperiod_technology_4}) is separated from transformation across unexpected states in the future (Equation \eqref{eq:twoperiod_technology_5}). In a sense, the stochastic growth model is a perfect foresight model with shocks ``tacked on.''\footnote{%
We thank an anonymous referee for suggesting this analogy.
} %
As a result, the model structure in this linearized framework does not lead to a direct interaction between the Intertemporal and Risk Terms.

Approximate linearity implies there are no interactions at all, not even indirect ones. To see this, take innovations of Equations \eqref{eq:twoperiod_technology_3} and \eqref{eq:twoperiod_technology_4}:
\begin{align}
    \Delta \mathbb{E}_{1}c_{1} &= \Delta \mathbb{E}_{1} \left[x+\varepsilon_1 + \alpha k_{1}\right] = \varepsilon_1, \label{eq:innovations_algebra}
    \quad \text{and} \quad
    \Delta \mathbb{E}_{1}r_{1} = \Delta \mathbb{E}_{1} \left[x+\varepsilon_1 - (1-\alpha)k_{1}\right] = \varepsilon_1,    
\end{align}
respectively, where $x$ and $k_1$ are removed since they are known. Plugging into Equation \eqref{eq:twoperiod_euler_separated} and using the properties of the log-normal distribution yields
\begin{align}
    0	&= \underbrace{
        \vphantom{\frac{1}{2}}\mathbb{E}\left[-\frac{1}{\psi}\Delta c_1+r_1\right]
    }_{\text{Intertemporal Term}} 
     + \underbrace{
        \frac{1}{2}\left(\frac{1}{\psi}-1\right)\left(1-\gamma\right)\sigma^2
     }_{\text{Risk Term}}
     . 
     \label{eq:twoperiod_euler_separated_b}
\end{align}
Equation \eqref{eq:twoperiod_euler_separated_b} shows that variables that govern intertemporal transformation (namely $x$ and $k$) are irrelevant for the Risk Term.

As a result of this stark separation, the Risk Term drops out if we differentiate Equation \eqref{eq:twoperiod_euler_separated} with respect to $x$:
\begin{align}
    0
    &=
    \frac{d}{d x} \mathbb{E}\left[-\frac{1}{\psi}\Delta c_1+r_1\right].
\end{align}
Plugging in $\mathbb{E}[\Delta c_1] = (\alpha + 1) i$ (based on Equations \eqref{eq:twoperiod_technology_2}-\eqref{eq:twoperiod_technology_3}) and $\mathbb{E} [r_1 ]= x - (1-\alpha)(k+i)$ (based on Equations \eqref{eq:twoperiod_technology_2} and \eqref{eq:twoperiod_technology_4}), we can solve for $\frac{di}{dx}$ in closed form:
\begin{align}
    \frac{di}{dx}
      &=
      \frac{\psi}{(\alpha+1) + \psi(1-\alpha)}.
    \label{eq:di/dx}
\end{align}
Thus, $\gamma$ and $\sigma$ are irrelevant for the elasticity of investment with respect to productivity. Similarly, one can show that $\frac{di}{dk}$, $\frac{dc}{dx}$, and $\frac{dc}{dk}$ also do not depend on $\gamma$ or $\sigma$. 

What these elasticities have in common is that they are all with respect to ``first moment states''---variables that affect the expected values of the model structure next period but not higher moments. Thus, $\gamma$ and $\sigma$ are irrelevant for the elasticity of endogenous variables with respect to first moment states. This irrelevance extends to a much broader class of models, as we show in Sections \ref{sec:tallarini_extension}-\ref{sec:Numerical-Verification}.

As a result of this irrelevance, the presence of risk has no effect on how the model responds to first moment shocks. The special cases of the model with perfect foresight (or absence of risk, $\sigma=0)$ or risk neutrality ($\gamma=0$) lead to identical impulse responses to shocks to first moment states. 

An important caveat is that $\gamma$ and $\sigma$ still affect the overall endogenous variable levels, such as the level of $i$, as well as the response of these variables to high moment shocks, such as $\frac{di}{d\sigma}$. This means that risk is relevant for steady states, and indeed its effects on steady states can be significant, and the dynamics of endogenous variables through high moment states. 

\subsection{Breaking Irrelevance}\label{sec:twoperiod:breaking}

To have risk play a meaningful role in business cycles, one can ``break'' irrelevance. By breaking irrelevance, we mean having $\gamma$ or $\sigma$ enter into $\frac{di}{dx}$ (Equation \eqref{eq:di/dx}). This can be achieved by removing any of the three key assumptions:
\begin{enumerate}
    \item \emph{Separation of intertemporal and risk preferences}: Epstein-Zin preferences (Equations \eqref{eq:twoperiod_utility} and \eqref{eq:twoperiod_g_def}) ensure $\gamma$ does not enter the Intertemporal Term of the Euler equation (Equation \eqref{eq:twoperiod_euler_separated}). Replacing these preferences with many alternatives would then have $\gamma$ enter into $di/dx$. 

    However, having $\gamma$ enter the Intertemporal Term implies that $\gamma$ should no longer be interpreted solely as risk aversion. For example, if we replace EZ preferences with power utility, then $\gamma$ should be interpreted as \emph{both} risk aversion and the EIS, and it is the EIS interpretation of $\gamma$ that is relevant for Equation \eqref{eq:di/dx}.

    There are more economically meaningful ways to remove the separation between intertemporal and risk preferences. In the \citet{campbell1999force} habit model, risk aversion (and the EIS) vary over time as consumption gets closer to habit. Having heterogeneous households leads to similar effects, as shown by \citet{guvenen2009parsimonious}.  Both models lead to links between the Intertemporal and Risk Terms of the Euler equation, breaking irrelevance.

    \item \emph{Separation of first and higher moments of the remaining model structure}: The production technology (Equations \eqref{eq:twoperiod_technology_1} - \eqref{eq:twoperiod_technology_4}) assumes purely intertemporal transformation (using $i$ to move goods across time) is separated from transformation across unexpected states in the future (controlled by $\sigma$). Removing this assumption leads to interactions between the Intertemporal and Risk Terms, implying $\gamma$ and $\sigma$ enter into $\frac{di}{dx}$.\footnote{%
    For example, suppose production is non-separable as follows: $k_1=(k+i)\mathbf{1}\{\varepsilon_1 \ge 0\}$. Then consumption innovations \eqref{eq:innovations_algebra} gain investment terms
    \[
    \Delta \mathbb{E}_1 c_1
    = \varepsilon_1
    + \alpha\left((k+i)\mathbf{1}\{\varepsilon_1 \ge 0\}-\mathbb{E}[(k+i)\mathbf{1}\{\varepsilon_1 \ge 0\}]\right),
    \]
    and thus investment shows up in the Risk Term of Equation \eqref{eq:twoperiod_euler_separated}. Differentiating with respect to $x$ then leads to risk terms $\gamma$ and $\sigma$ entering $\frac{di}{dx}$, breaking irrelevance.}

    In the general model (Section \ref{sec:Model}), this assumption regards separation in all sectors of the economy, not just final goods production. This includes the government sector, the financial sector, and the labor market---essentially, any aspect of the model other than preferences. It also applies to the endowment process in endowment economies. For example, \citet{bansal2004risks} assume separation of first and higher moments in long-run growth, leading to the irrelevance of risk aversion for how asset prices respond to expected growth shocks (see their Equation (5)). 

    Removing this assumption is natural. In the real world, production decisions have risk-return tradeoffs. In the literature, such tradeoffs are studied in \citet{belo2010production}, \citet{cochrane2021rethinking}, \citet{DiTellaHall2022}, and \citet{Zeke2024RiskTaking}, for instance. In all of these models, production must trade off transforming goods across time versus transforming goods across unexpected states. For example, in \citeauthor{DiTellaHall2022} workers and entrepreneurs produce goods, with entrepreneurs facing uninsurable idiosyncratic shocks, leading to an aggregate production function that incorporates risk. A common theme in these models is that higher moments such as volatilities are endogenous. Naturally, then, both risk aversion and volatility affect how the model responds to first moment shocks.

    \item \emph{Linearity of the constraints}:  If the constraints are nonlinear, then Equation \eqref{eq:twoperiod_uhlig} would require higher order terms like $x^2$, implying $c_1$ has interactions between $\varepsilon_1$ and $x$. This would lead to $x$ appearing in the innovation $\Delta \mathbb{E}_{1}c_1$ (Equation \eqref{eq:innovations_algebra}), implying the Risk Term of Equation \eqref{eq:twoperiod_euler_separated_b} depends on $x$, breaking irrelevance.

    The economics are intuitive. Non-linear relationships link the current state and forward looking volatility (\citealt{chen2017external}). A priori, it is hard to tell whether the constraints are nonlinear enough to meaningfully break irrelevance. In Section \ref{sec:Numerical-Verification} we show that simply increasing risk aversion or the volatility of the volatility does not lead to enough nonlinearity, but it is difficult to provide a complete characterization. 
    
    Nevertheless, the literature shows that occasionally binding constraints can introduce sufficient nonlinearity. For example, \citet{GourioNgo2020ZLB} show irrelevance breaks when the nominal short-term interest rate is close to its zero lower bound as a result of increased endogenous volatility. \citeauthor{brunnermeier2014macroeconomic}'s  \citeyearpar{brunnermeier2014macroeconomic} model of financial frictions shows how nonlinearities dramatically increase the volatility of variables far from the steady state, much more so than found in the more linear predecessor \citet{Bernanke1999}. 

\end{enumerate}

\subsection{Adapting to Irrelevance}\label{sec:twoperiod:adapting}

Alternatively, one can incorporate risk in model dynamics by ``adapting'' to irrelevance---that is, staying within the implications of irrelevance (e.g. Equation \eqref{eq:di/dx}), but driving the model 
with shocks to ``higher moment states''---variables that govern the higher central moments of the model structure.

Adapting to irrelevance requires to avoid the \citet{barro1984time} comovement problem. \citet{barro1984time} show that, for neoclassical models, many shocks fail to generate the observed comovement of consumption, investment, and labor. This limitation can be avoided by incorporating additional model elements.

Examples of adapting to irrelevance include \citet{basu2017uncertainty} and \citet{Basu2024Risky}. \citet{basu2017uncertainty} drive business cycles with shocks to the volatility of time preference, and maintain comovement by incorporating nominal rigidities. \citet{Basu2024Risky} drive business cycles with shocks to risk aversion, and recover comovement by modeling the reallocation of savings between labor and capital. Other methods for overcoming the comovement problem are studied in \citet{dupor2014analytics}, including consumption-investment complementarities and externalities in leisure preferences.

\section{Irrelevance in a Dynamic Model}\label{sec:tallarini_extension}

We examine a stochastic growth model with shocks to the volatility of productivity and capital adjustment costs. This model allows for dynamic asset prices, clarifying and extending the \citet{tallarini2000risk} separation result, and explaining a key difficulty of modeling asset prices in production economies.

A representative household has \citet*{epstein1989substitution} recursive
preferences
\begin{align}
V_{t} & =\left\{ (1-\beta)C_{t}^{1-1/\psi}+\beta W_{t}^{1-1/\psi}\right\} ^{1/(1-1/\psi)},\label{eq:ezmodel_v_recur}\\
\text{where} \quad W_{t} & =\left[\mathbb{E}_{t}(V_{t+1}^{1-\gamma})\right]^{\frac{1}{1-\gamma}},\label{eq:ezmodel_certequiv}
\end{align}
$W_{t}$ is the certainty equivalent, $V_{t}$ is the utility aggregator, $C_{t}$ is consumption, 
$\psi$ is the elasticity of intertemporal substitution of consumption
(EIS), $\gamma$ is risk aversion, and $\beta$ is the time preference parameter. Production, which in equilibrium is used for consumption and investment $I_t$, follows
\begin{align}
C_{t}+I_{t} & =X_{t}K_{t}^{\alpha}N_{t}^{(1-\alpha)},\label{eq:ezmodel_rc}\\
K_{t+1} & =(1-\delta)K_{t}+\Phi(I_{t}/K_{t})K_{t},
\end{align}
where $X_{t}$ is productivity, $K_{t}$ is
capital, $N_{t}$ is labor, $\alpha$ is the capital share, $\delta$
is depreciation, and $\Phi(I_{t}/K_{t})$ is, the adjustment cost in capital accumulation, follows \citet{jermann1998asset}: 
\begin{align*}
\Phi(I_{t}/K_{t})\equiv & a_{1}+\frac{a_{2}}{1-1/\xi}(I_{t}/K_{t})^{1-1/\xi},
\end{align*}
where $\xi$ is the elasticity of investment with respect to Tobin's $q$, and the constants $a_{1}=\frac{\overline{IK}}{1-1/\xi}$ and $a_{2}=\overline{IK}^{1/\xi}$ are chosen so that, at the non-stochastic steady state investment rate $\overline{IK}$, the adjustment cost is zero and the marginal cost of investment is 1. With no disutility from labor, in equilibrium $N_{t}$ will always be the maximum possible, which we normalize to 1.

Log-productivity $x_{t}$ follows a heteroskedastic AR(1) process
in logs: 
\begin{align}
x_{t+1} & =\phi_{x}x_{t}+\epsilon_{x,t+1},\label{eq:ezmodel_x_lom}\\
\epsilon_{x,t+1} & \sim N\left(0,\frac{1}{2}\sigma_{x}^{2}h_{t}\right)\quad\text{i.i.d.,}
\end{align}
where $\phi_{x}$ is the persistence of productivity, and $\epsilon_{x,t+1}$ is a shock that depends on the volatility level $\sigma_{x}$ and the volatility process $h_{t}$. $h_{t}$ follows the AR(1) process
\begin{align*}
h_{t+1} & =\phi_{h}h_{t}+\epsilon_{h,t+1},\\
\epsilon_{h,t+1} & \sim N(0,\sigma_{h}^2)\quad\text{i.i.d.,}
\end{align*}
where $\phi_{h}$ is the persistence, and $\epsilon_{h,t+1}$ is a
shock with volatility $\sigma_{h}$.

Optimal firm investment implies the Euler equation
\begin{align}
1= & \mathbb{E}_{t}\left[M_{t+1}R_{t+1}\right],\label{eq:ezmodel_euler_level}
\end{align}
where $M_{t+1}$ is the Epstein-Zin stochastic discount factor
\begin{align*}
M_{t+1}\equiv & \beta\left(\frac{C_{t+1}}{C_{t}}\right)^{-1/\psi}\left(\frac{V_{t+1}}{W_{t}}\right)^{-(\gamma-1/\psi)},\label{eq:ezmodel_SDF}
\end{align*}
and $R_{t+1}$ is the investment return
\begin{align*}
R_{t+1}\equiv & \Phi'(I_{t}/K_{t})\left\{ \alpha X_{t+1}K_{t+1}^{-(1-\alpha)}+\frac{1}{\Phi'(I_{t+1}/K_{t+1})}\left[\Phi(I_{t+1}/K_{t+1})+1-\delta\right]-\frac{I_{t+1}}{K_{t+1}}\right\} .
\end{align*}

As the production technology is homogeneous of degree one, the stock price $S_{t}$
can be obtained using Q-theory (\citealt{hayashi1982tobin}, to be
\citealt{abel1983optimal}):
\begin{align}
    S_{t}= & Q_{t}K_{t}, \quad \text{where} \quad Q_{t}=  \Phi'(I_{t}/K_{t})^{-1}
\end{align}        
is Tobin's Q.

In this setting, the state variables are $K_{t}$, $x_{t}$, and $h_{t}$. Only $h_{t}$ is a higher moment state. $K_{t}$ and $x_{t}$ are first moment states.

\subsection{Risk-Adjusted Affine Solution}

Our risk-adjusted affine solution proceeds in two steps. First, we transform the Equations (\ref{eq:ezmodel_certequiv})
and (\ref{eq:ezmodel_euler_level}) into logs, preserving exact equality:
\begin{align}
w_{t} & =\frac{1}{1-\gamma}\log\mathbb{E}_{t}\exp\left[(1-\gamma)v_{t+1}\right],\label{eq:ezmodel_logCE}\\
0 & =\log\mathbb{E}_{t}[\exp\left(m_{t+1}+r_{t+1}\right)].\label{eq:ezmodel_euler}
\end{align}
Second, we log-linearize the remaining constraints following \citet{uhlig1999toolkit}:
\begin{align}
    \bar{V}^{1-\frac{1}{\psi}}v_{t}= & (1-\beta)\bar{C}^{1-\frac{1}{\psi}}c_{t}+\beta\bar{W}^{1-\frac{1}{\psi}}w_{t}\label{eq:app_ezlin_v_recur_main}\\    
    \overline{CY}c_{t}+\overline{IY}i_{t}= & x_{t}+\alpha k_{t}\label{eq:app_ezlin_rc_main}\\
    k_{t+1}= & (1-\delta)k_{t}+\delta i_{t}\\
    \bar{R}r_{i,t+1}= & (\alpha\overline{YK}+1-\bar{R})+\alpha\overline{YK}(x_{t+1}-(1-\alpha)k_{t+1})+q_{t+1}-\bar{R}q_{t}\label{eq:app_ezlin_rinv_main}\\
    q_{t}= & \xi^{-1}(i_{t}-k_{t}-\delta)\label{eq:app_ezlin_q_main}\\    
    s_{t}= & q_{t}+k_{t}
\end{align}
Details of the linearization are found in Appendix \ref{subsec:Appendix_ezmodel}.

Although Equations \eqref{eq:ezmodel_logCE} and \eqref{eq:ezmodel_euler} are non-linear, the normal shocks imply
\begin{align}
    \log\mathbb{E}_{t}\left[\exp(\epsilon_{x,t+1})\right]=\frac{1}{2}\sigma_{x}^{2}h_{t},
\end{align}
and thus the solution is linear in state variables.

\subsection{Irrelevance\label{subsec:ez_Irrelevance}}

The method of undetermined coefficients implies the law of motion for the
log of stock prices has the form
\begin{align}
s_{t} & =Z_{s0}\left(\beta,\psi,\gamma,\alpha,\delta,\xi,\sigma_{x},\sigma_{h},\boldsymbol{\bar{A}}\right)+Z_{sk}\left(\psi,\alpha,\delta,\xi,\boldsymbol{\bar{A}}\right)k_{t}\nonumber \\
 & \hphantom{\{=\}}+Z_{sx}\left(\psi,\alpha,\delta,\xi,\phi_{x},\boldsymbol{\bar{A}}\right)x_{t}+Z_{sh}\left(\beta,\psi,\gamma,\alpha,\delta,\xi,\phi_{h},\sigma_{x},\boldsymbol{\bar{A}}\right)h_{t},\label{eq:eq:ezmodel_s_lom}
\end{align}
and capital has the form
\begin{align}
k_{t+1} & =Z_{k0}\left(\beta,\psi,\gamma,\alpha,\delta,\xi,\sigma_{x},\sigma_{h},\boldsymbol{\bar{A}}\right)+Z_{kk}\left(\psi,\alpha,\delta,\xi,\boldsymbol{\bar{A}}\right)k_{t}\nonumber \\
 & \hphantom{\{=\}}+Z_{kx}\left(\psi,\alpha,\delta,\xi,\phi_{x},\boldsymbol{\bar{A}}\right)x_{t}+Z_{kh}\left(\beta,\psi,\gamma,\alpha,\delta,\xi,\phi_{h},\sigma_{x},\boldsymbol{\bar{A}}\right)h_{t},\label{eq:ezmodel_k_lom}
\end{align}
where $\boldsymbol{\bar{A}}$ is a set of approximation points. The
full expressions for the coefficients, along with an outline of the method of undetermined coefficients, is in Appendix \ref{subsec:app_ezmodel_Solving-the-model}.

Irrelevance is seen in the elasticities with respect to capital and productivity: $Z_{sk}\left(\psi,\alpha,\delta,\xi,\boldsymbol{\bar{A}}\right)$, $Z_{sx}\left(\psi,\alpha,\delta,\xi,\phi_{x},\boldsymbol{\bar{A}}\right)$, $Z_{kk}\left(\psi,\alpha,\delta,\xi,\boldsymbol{\bar{A}}\right)$, and $Z_{kx}\left(\psi,\alpha,\delta,\xi,\phi_{x},\boldsymbol{\bar{A}}\right)$ are all independent of $\gamma$, $\phi_h$, and $\sigma_h$. Thus, risk aversion and time-varying risk are irrelevant for how the model responds to changes in productivity and capital.

In the special case where $h_{t}$ is a constant, we obtain
the \citet{tallarini2000risk} result: risk aversion is irrelevant for the model's quantity dynamics. However, risk aversion does matter for the intercepts in Equations (\ref{eq:eq:ezmodel_s_lom}) and (\ref{eq:ezmodel_k_lom}),
consistent with \citet{tallarini2000risk}'s finding that the equity premium depends on risk aversion.

Equation (\ref{eq:eq:ezmodel_s_lom}) goes beyond \citet{tallarini2000risk},
however. When $h_{t}$ is constant, risk aversion
is not only irrelevant for quantity dynamics, but also irrelevant
for the model's stock price dynamics. More generally, Section \ref{sec:Model}
shows that constant $h_{t}$ implies risk aversion is irrelevant for
the dynamics of all asset prices, not just quantities. This clarifies that the \citet{tallarini2000risk} separation result is not between macroeconomic and financial variables, but between first and higher moments of the model. Risk aversion affects the means of variables, but not their volatilities or impulse responses to shocks.

Even if $h_{t}$ varies over time, Equation (\ref{eq:eq:ezmodel_s_lom}) shows that the elasticity of the stock price with respect to productivity is independent of risk aversion and time-varying risk. This result implies that the inclusion of elements such as long-run risk or disaster risk has little effect on how stock prices respond to productivity shocks. 

\section{The General Irrelevance Theorem\label{sec:Model}}

We prove the irrelevance result in a general, abstract structure that covers a broad class of representative-agent rational-expectations economic models used in the literature. Section \ref{subsec:Structure} describes this structure, Section \ref{subsec:Approximation} presents the risk-adjusted approximation that we use to prove the theorem,  Section \ref{subsec:Main-Theorem} characterizes our irrelevance theorem, and Section \ref{subsec:TheoremExtension} shows a model extension in which irrelevance still holds.

\subsection{Model Structure \label{subsec:Structure}}

The structure is characterized by a system of expectational equations (equilibrium conditions) of the form
\begin{eqnarray}
\mathbb{E}\left[f\left(\mathbf{z}_{t+1},\mathbf{x}_{t+1}|\mathbf{z}_{t},\mathbf{x}_{t},\mathbf{h}_{t},\mathbf{z}_{t-1};\Theta\right)\right]=0,\label{eqn:ExpEqnGeneral1}
\end{eqnarray}
where $f(\cdot)$ is a vector of $n_{z}$ arbitrary scalar functions, $\text{\textbf{z}}_{t}$ is a vector of $n_{z}$ endogenous variables, $\text{\textbf{x}}_{t}$ is a vector of $n_{x}$ ``first moment'' exogenous variables, $\text{\textbf{h}}_{t}$ is a vector of $n_{h}$ ``higher moment'' exogenous variables, and $\Theta$ is a set of model parameters. The sense in which $\text{\textbf{x}}_{t}$ are first moment variables and $\text{\textbf{h}}_{t}$ are higher moment variables will become clear below. Solving the model consists in finding the process of motion for $\text{\textbf{z}}_{t}$ that satisfies Equation (\ref{eqn:ExpEqnGeneral1}) given exogenous process for $\text{\textbf{x}}_{t}$ and $\text{\textbf{h}}_{t}$.

The model parameters are
\begin{align*}
\Theta\equiv & \{\gamma,\theta_{h},\theta_{0}\},
\end{align*}
where $\gamma$ is the coefficient of risk aversion, $\theta_{h}$ is a vector of parameters that govern higher moments, and $\theta_{0}$ is a vector that contains all other model parameters. Our goal is to show that $\gamma$ and $\theta_{h}$ are irrelevant for key properties of the model. In particular, we want to show that the endogenous reaction of $\text{\textbf{z}}_{t}$ to the first moment variables $\text{\textbf{x}}_{t}$ and to lagged variables $\text{\textbf{z}}_{t-1}$ is not affected by risk aversion and by the higher order moment parameters.

\subsubsection{Specialized Notation\label{subsec:Notation}}

Here we introduce some specialized notation to keep the formalism compact. We define $\text{\textbf{exp}}(\textbf{u})$ as a vectorized version of the exponential function. This function maps an $n_{u}$-vector $\textbf{u}=(u_{1},u_{2},...,u_{n_{u}})^{\top}$ of random variables into a vector where each element is the exponential of its corresponding element in $\textbf{u}$; that is,
\begin{equation}
\text{\textbf{exp}}(\textbf{u})\equiv\left(\exp(u_{1}),\exp(u_{2}),...,\exp(u_{n_{u}})\right)^{\top}.\label{eq:main_exp_vec}
\end{equation}
Similarly, we define $\boldsymbol{\log}(\textbf{u})$ as
\begin{equation}
\boldsymbol{\log}(\textbf{u})\equiv\left(\log(u_{1}),\log(u_{2}),...,\log(u_{n_{u}})\right)^{\top}.\label{eq:main_log_vec}
\end{equation}
Finally, we define entropy as the conditional matrix operator %
\footnote{
This operator is known as relative entropy in the asset pricing literature
or Theil's second entropy measure in the literature on inequality.
}%
\begin{eqnarray}
\mathbb{\mathbb{H}}_{t}[\textbf{u}]\equiv\boldsymbol{\log}\thinspace\thinspace\mathbb{E}_{t}\left[\text{\textbf{exp}}\left(\textbf{u}-\mathbb{E}_{t}[\textbf{u}]\right)\right],\label{eqn:curvature}
\end{eqnarray}
which allows us to express the vectorized cumulant generating function (CGF) $K_u(A)$ for any random vector $\textbf{u}$ and matrix $A$ of appropriate dimensions as
\begin{eqnarray}
	K_u(A)
	\equiv \boldsymbol{\log}\,\mathbb{E}_t\!\left[\textbf{exp}(A\mathbf{u})\right]= \boldsymbol{\log}\,\mathbb{E}_t\!\left[\textbf{exp}\!\left(A\mathbb{E}_t[\mathbf{u}]
		  + A(\mathbf{u}-\mathbb{E}_t[\mathbf{u}])\right)\right]
	= A\mathbb{E}_t[\mathbf{u}] + \mathbb{H}_t[A\mathbf{u}]. \label{eq:app_main_Hdecomp}
\end{eqnarray}

\subsubsection{Exogenous Processes\label{subsec:Exogenous-Processes}}

First moment states $\text{\textbf{x}}_{t}$ follow the heteroskedastic autoregressive process
\begin{align}
\mathbf{x}_{t+1} & =\mu_{x}+\Phi_{x}\mathbf{x}_{t}+\epsilon_{x,t+1},\label{eq:main_x_process}\\
\epsilon_{x,t+1} & \sim\text{pdf}_{x}\text{\ensuremath{(\cdot|\text{\textbf{h}}_{t},\theta_{0})}},\label{eq:main_epx_pdf}
\end{align}
where $\mu_{x}\in\theta_{0}$ is a vector of intercepts, $\Phi_{x}\in\theta_{0}$ is a matrix of autoregressive coefficients, and $\text{pdf}_{x}\text{\ensuremath{(\cdot|\text{\textbf{h}}_{t},\theta_{0})}}$ is a probability distribution function (pdf) with zero mean. 

The higher moment states $\text{\textbf{h}}_{t}$ follow the autoregressive process
\begin{align}
\text{\textbf{h}}_{t+1} & =\mu_{h}+\Phi_{h}\mathbf{h}_{t}+\epsilon_{h,t+1},\label{eq:main_h_process}\\
\epsilon_{h,t+1} & \sim\text{pdf}_{h}\text{\ensuremath{(\cdot|\text{\ensuremath{\text{\textbf{h}}_{t},}}\theta_{h}),}}\label{eq:main_eph_pdf}\\
\epsilon_{h,t+1},\text{ and }\epsilon_{x,t+1} & \quad\text{are independent},\label{eq:independent_hx}
\end{align}
where $\mu_{h}\in\theta_{h}$ is a vector of intercepts, $\Phi_{h}\in\theta_{h}$ is a matrix of autoregressive coefficients, and $\text{pdf}_{h}\text{\ensuremath{(\cdot|\text{\textbf{h}}_{t},\theta_{h})}}$ is an arbitrary pdf. The independence of $\epsilon_{h,t+1}$ and $\epsilon_{x,t+1}$ implies first and higher moment states do not interact by assumption.

Equations (\ref{eq:main_x_process})-(\ref{eq:independent_hx}) imply that $\theta_{h}$ governs higher moments; that is, $\theta_{h}$ determines the evolution of $\text{\textbf{h}}_{t}$ which, in turn, determines the conditional higher moments of $\text{\textbf{h}}_{t+1}.$ In contrast, $\text{\textbf{x}}_{t}$ does not affect any of the shock distributions, and thus is a vector of first moment states.

\subsubsection{Equilibrium Conditions \label{subsec:EquilibriumConditions}}

We divide the set of equilibrium conditions (\ref{eqn:ExpEqnGeneral1}) into two groups. The first group of equilibrium conditions is given by the subsystem of expectational equations
\begin{eqnarray}
	\mathbb{E}\left[f_{0}\left(\mathbf{z}_{t+1},\mathbf{x}_{t+1}|\mathbf{z}_{t},\mathbf{x}_{t},\mathbf{\mathbf{h}_{t},z}_{t-1};\theta_{0}\right)\right]=0,\label{eq:main_production}
\end{eqnarray}
where $f_{0}(\cdot)$ is a vector of $n_{z}$-$n_{m}$ scalar functions that depend on $\theta_{0}$ only, and then do not involve $\gamma$ or $\theta_{h}$. Equations in this subsystem typically represent production and resource constraints, and may include government policies and other economic agents' optimality conditions. For the model in Section \ref{sec:tallarini_extension}, Equation (\ref{eq:main_production}) includes the production constraints as well as the intratemporal utility aggregator. Equation (\ref{eq:main_production}) also nests endowment economies in which the additional constraints simply set consumption equal to an exogenous endowment.

Importantly, Equation (\ref{eq:main_production}) does not depend on risk aversion $\gamma$ or parameters that govern higher moments $\theta_{h}$.

The second group contains those equations that involve risk aversion, and is represented by
\begin{align}
\boldsymbol{1}_{n_{m}} & =\mathbb{E}_{t}[\text{\textbf{exp}}(m_{t+1}\boldsymbol{1}_{n_{m}}+\mathbf{r}_{t+1})].\label{eq:main_Euler}
\end{align}

Equation (\ref{eq:main_Euler}) is a vector of $n_{m}$ asset-pricing Euler equations that must hold in equilibrium, where $\boldsymbol{1}_{n_{m}}$ is an $n_{m}$-vector of ones, $\mathbf{r}_{t+1}$ is the $n_{m}$-vector of returns, and $m_{t+1}$ is the log stochastic discount factor. Here, $\mathbf{r}_{t+1}$ represents the intertemporal return on any technology, not necessarily an asset return.

The log SDF is given by
\begin{equation}
m_{t+1}=m_{t+1}^{*}+\left(\frac{1}{\psi}-\gamma\right)(v_{t+1}-w_{t}).\label{eq:main_sdf}
\end{equation}
In this equation, $\psi\in\theta_{0}$ is the EIS, $m_{t+1}^{*}$ is the component of the discount factor that is not affected by $\gamma$, and $\left(\frac{1}{\psi}-\gamma\right)(v_{t+1}-w_{t})$ is the recursive preference component of the discount factor, where
\begin{align}
w_{t} & =\frac{1}{1-\gamma}\log\mathbb{E}_{t}\left[\exp((1-\gamma)v_{t+1})\right]\label{eq:main_certainty_eq}
\end{align}
is the log of the certainty equivalent of the representative agent (log) intertemporal utility $v_{t+1}$, as in \citet*{epstein1989substitution}.\footnote{In models with growth, $w_{t}$ and $v_{t+1}$ are usually detrended in order to solve for stationary dynamics. Equation (\ref{eq:main_certainty_eq})
is still valid in this case, assuming that the variables are detrended
and adjusted for growth.} Equations (\ref{eq:main_sdf}) and (\ref{eq:main_certainty_eq}) nest Epstein-Zin preferences with $m_{t+1}^{\ast}=\log\beta-(1/\psi)\Delta c_{t+1}$, and can also accommodate extensions of Epstein-Zin preferences like the time preference shocks of \citet{albuquerque2016valuation}, or the Epstein-Zin-Habit preferences of \citet{yang2015intertemporal}. Indeed, Section \ref{subsec:TheoremExtension} shows that irrelevance holds under an even more general class of preferences.

\subsection{Affine Approximation\label{subsec:Approximation}}

The nonlinear nature of equilibrium conditions, in general, does not allow for analytical proofs. We then use risk-adjusted approximations of these conditions to prove the irrelevance theorem. 

Our approximation is characterized by the following two assumptions:
\begin{assump}
\label{def:cond_affine}(Conditionally Affine Constraints) The model
has conditionally affine constraints if Equation (\ref{eq:main_production})
is approximated by
\begin{align}
b^{1}(\theta_{0})+B_{z}^{1}(\theta_{0})\mathbf{z}_{t}+B_{zl}^{1}(\theta_{0})\mathbf{z}_{t-1}+B_{x}^{1}(\theta_{0})\mathbf{x}_{t} & =\boldsymbol{\log}\thinspace\thinspace\mathbb{E}_{t}\left[\text{\textbf{exp}}\left(D_{z}^{1}(\theta_{0})\mathbf{z}_{t+1}+D_{x}^{1}(\theta_{0})\mathbf{x}_{t+1}\right)\right],\label{eq:main_aff_tech}
\end{align}
where $b^{1}(\theta_{0})$, $B_{z}^{1}(\theta_{0})$, $B_{zl}^{1}(\theta_{0})$, $B_{x}^{1}(\theta_{0})$, \textbf{$D_{z}^{1}(\theta_{0})$}, and $D_{x}^{1}(\theta_{0})$ are conforming vector and matrix functions of $\theta_{0}$ that approximate the vector of $f_{0}$ functions in Equation (\ref{eq:main_production}). \end{assump}

\begin{assump}
\label{def:cond_affine_CGF}(Conditionally Affine CGFs) The cumulant
generating functions of the shocks $\epsilon_{x,t+1}$ in Equation
(\ref{eq:main_epx_pdf}) and $\epsilon_{h,t+1}$ in Equation (\ref{eq:main_eph_pdf})
are approximated by
\begin{align}
K_{\epsilon_x}(\nu_{x})=\boldsymbol{\log}\thinspace\thinspace\mathbb{E}_{t}\left[\boldsymbol{\exp}\left(\nu_{x}\epsilon_{x,t+1}\right)\right] & =\tilde{g}_{x}(\nu_{x};\Theta)+\tilde{G}_{x}(\nu_{x};\Theta)\mathbf{h}_{t},\label{eq:main_aff_epx}\\
\text{and } \, K_{\epsilon_h}(\nu_{h})=\boldsymbol{\log}\thinspace\thinspace\mathbb{E}_{t}\left[\boldsymbol{\exp}\left(\nu_{h}\epsilon_{h,t+1}\right)\right] & =\tilde{g}_{h}(\nu_{h};\Theta)+\tilde{G}_{h}(\nu_{h};\Theta)\mathbf{h}_{t},\label{eq:main_aff_eph}
\end{align}
respectively, where for any $n>0,$ $\nu_{x}$ is any $n\times n_{x}$ matrix, $\nu_{h}$ is any $n\times n_{h}$ matrix, $\tilde{g}_{x}(\nu_{x};\Theta)$ and $\tilde{g}_{h}(\nu_{h};\Theta)$ are $n$-vector-valued functions, and $\tilde{G}_{x}(\nu_{x};\Theta)$ and $\tilde{G}_{h}(\nu_{h};\Theta)$ are $n\times n_{h}$-matrix-valued functions. \end{assump} Equation (\ref{eq:main_aff_tech}) assumes that the constraints in Equation (\ref{eq:main_production}) can be approximated with a risk-adjusted linearization, while Equations (\ref{eq:main_aff_epx}) and (\ref{eq:main_aff_eph}) assume that time-varying higher order properties in the model are well preserved by linearizing the CGFs of the shocks. 

Assumption \ref{def:cond_affine} allows us to clearly describe the dependence structure of the model. Most equilibrium conditions in economic models are of the type represented by Equation (\ref{eq:main_aff_tech}), as they depend only on $\theta_{0}$. Risk aversion affects model dynamics through Equations (\ref{eq:main_Euler})-(\ref{eq:main_certainty_eq}), as they depend on $\gamma$. We do not approximate these equations to effectively capture risk effects as in \citet{malkhozov2014asset}.\footnote{The component $m_{t+1}^{*}$of the discount factor in Equation (\ref{eq:main_sdf})
has to be linearized if necessary to express it as a linear combination
of state variables. This linearization, however, does not affect the
ability of the model approximation to capture risk effects. } These equations contain expectations and then may depend on $\text{\textbf{h}}_{t}$ and $\theta_{h}$ as shown in Equations (\ref{eq:main_aff_epx}) and (\ref{eq:main_aff_eph}).

Without loss of generality, the set of returns $\mathbf{r}_{t+1}$ and the variables $m_{t+1}^{*}$ and $v_{t+1}$ are included in the set of endogenous variables $\mathbf{z}_{t+1}$ so that
\begin{eqnarray}
	\mathbf{r}_{t+1}=A_{r}(\theta_{0})\mathbf{z}_{t+1},\, m_{t+1}^{*}\boldsymbol{1}_{n_{m}}=A_{m^{*}}(\theta_{0})\mathbf{z}_{t+1},\,
	m_{t+1}\boldsymbol{1}_{n_{m}}=A_{m}\mathbf{z}_{t+1},\,\text{and}\, v_{t+1}=A_{v}(\theta_{0})\mathbf{z}_{t+1},\label{eqn:EndoVarSelection}
\end{eqnarray}
where $A_{r}(\theta_{0})$, $A_{m^{*}}(\theta_{0})$, and $A_{m}$ are $n_{m}\times n_{z}$ matrices and $A_{v}(\theta_{0})$ is a $1\times n_{z}$ vector that selects the appropriate elements from the set of endogenous variables.

\subsection{Irrelevance Theorem\label{subsec:Main-Theorem}}

We first characterize the equilibrium in terms of expectation and entropy terms.
\begin{prop}
\label{prop:all_plugged_in-1}If the model satisfies Assumption \ref{def:cond_affine},
Equation (\ref{eq:main_aff_tech}) can be written as
\begin{eqnarray}
b^{1}(\theta_{0})+B_{z}^{1}(\theta_{0})\mathbf{z}_{t}+B_{zl}^{1}(\theta_{0})\mathbf{z}_{t-1}+B_{x}^{1}(\theta_{0})\mathbf{x}_{t} & = & \mathbb{E}_{t}\left[D_{z}^{1}(\theta_{0})\mathbf{z}_{t+1}+D_{x}^{1}(\theta_{0})\mathbf{x}_{t+1}\right]\nonumber \\
	&  & +\mathbb{H}_{t}\left[D_{z}^{1}(\theta_{0})\mathbf{z}_{t+1}+D_{x}^{1}(\theta_{0})\mathbf{x}_{t+1}\right],\ \ \ \label{eq:main_aff_tech-1}
\end{eqnarray}
and Equation (\ref{eq:main_Euler}) can be written as
\begin{align}
\boldsymbol{0}_{n_{m}}= & \mathbb{E}_{t}[D_{z}^{2^{*}}(\theta_{0})\mathbf{z}_{t+1}]+\mathbb{H}_{t}\left[D_{z}^{2}(\theta_{0})\mathbf{z}_{t+1}\right]-\left(\frac{\frac{1}{\psi}-\gamma}{1-\gamma}\right)\mathbb{H}_{t}\left[(1-\gamma)A_{v}\text{\textbf{(\ensuremath{\theta_{0}})}}\mathbf{z}_{t+1}\right]\boldsymbol{1}_{n_{m}},\label{eq:main_Euler-1}
\end{align}
where $\boldsymbol{0}_{n_{m}}$ is an $n_{m}$-vector of zeros, $D_{z}^{2^{*}}(\theta_{0})\equiv A_{m^{*}}(\theta_{0})+A_{r}(\theta_{0})$, and $D_{z}^{2}(\theta_{0})\equiv A_{m}+A_{r}(\theta_{0})$. \end{prop}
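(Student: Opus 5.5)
The proof is a direct application of the cumulant–entropy decomposition in Equation (\ref{eq:app_main_Hdecomp}): for any random vector $\mathbf{u}$ and conforming matrix $A$, $\boldsymbol{\log}\,\mathbb{E}_t[\textbf{exp}(A\mathbf{u})]=A\mathbb{E}_t[\mathbf{u}]+\mathbb{H}_t[A\mathbf{u}]$. I will apply this identity once to the constraint block and then to the Euler block. The Euler block needs one extra step to remove the certainty equivalent $w_t$.

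For Equation (\ref{eq:main_aff_tech-1}), take $\mathbf{u}=(\mathbf{z}_{t+1}^\top,\mathbf{x}_{t+1}^\top)^\top$ and $A=[D_z^1(\theta_0)\;D_x^1(\theta_0)]$. The right-hand side of Equation (\ref{eq:main_aff_tech}) is then exactly $K_u(A)$. The decomposition gives the expectation term plus the entropy term, and linearity of $\mathbb{E}_t$ splits the expectation into the $\mathbf{z}$ and $\mathbf{x}$ parts. No further work is needed here.

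For Equation (\ref{eq:main_Euler-1}), I proceed in four steps.

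\begin{enumerate}
\item Take $\boldsymbol{\log}$ of Equation (\ref{eq:main_Euler}). Using Equation (\ref{eqn:EndoVarSelection}), the exponent $m_{t+1}\boldsymbol{1}_{n_m}+\mathbf{r}_{t+1}$ equals $(A_m+A_r(\theta_0))\mathbf{z}_{t+1}=D_z^2(\theta_0)\mathbf{z}_{t+1}$. The decomposition then yields
\[
\boldsymbol{0}_{n_m}=\mathbb{E}_t[D_z^2\mathbf{z}_{t+1}]+\mathbb{H}_t[D_z^2\mathbf{z}_{t+1}].
\]

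\item Remove the non-affine part of the expectation. By Equation (\ref{eq:main_sdf}),
\[
\mathbb{E}_t[m_{t+1}]=\mathbb{E}_t[m^*_{t+1}]+\left(\tfrac{1}{\psi}-\gamma\right)\left(\mathbb{E}_t[v_{t+1}]-w_t\right),
\]
so $\mathbb{E}_t[D_z^2\mathbf{z}_{t+1}]=\mathbb{E}_t[D_z^{2^*}\mathbf{z}_{t+1}]+(\frac1\psi-\gamma)(\mathbb{E}_t[v_{t+1}]-w_t)\boldsymbol{1}_{n_m}$.

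\item Express $w_t$ through entropy. Apply the same decomposition to Equation (\ref{eq:main_certainty_eq}) with scalar $\mathbf{u}=v_{t+1}=A_v\mathbf{z}_{t+1}$ and $A=1-\gamma$:
\[
(1-\gamma)w_t=(1-\gamma)\mathbb{E}_t[v_{t+1}]+\mathbb{H}_t[(1-\gamma)A_v\mathbf{z}_{t+1}].
\]
Hence $\mathbb{E}_t[v_{t+1}]-w_t=-\frac{1}{1-\gamma}\mathbb{H}_t[(1-\gamma)A_v\mathbf{z}_{t+1}]$.

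\item Substitute the result of step 3 into step 2, and that into step 1. This produces exactly the coefficient $-\frac{1/\psi-\gamma}{1-\gamma}$ multiplying the entropy of $(1-\gamma)v_{t+1}$, times $\boldsymbol{1}_{n_m}$.
\end{enumerate}

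The main point requiring care is bookkeeping rather than substance. Three things need checking. First, the selection matrix $A_m$ for $m_{t+1}$ is used in the entropy term, while $A_{m^*}$ is used in the expectation term; the latter is legitimate because the entropy operator is invariant to the $\mathcal{F}_t$-measurable term $w_t$, which drops out of $\mathbb{H}_t$. Second, the scalar $w_t$ must be broadcast correctly across the $n_m$ equations. Third, the division by $1-\gamma$ requires $\gamma\neq1$, with $\gamma=1$ handled as the usual limit.

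I would also note that Assumption \ref{def:cond_affine} is used only for the first block. The Euler block is exact, with no approximation invoked.
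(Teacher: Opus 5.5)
Your proposal is correct and follows essentially the same route as the paper. The paper applies the cumulant--entropy decomposition \eqref{eq:app_main_Hdecomp} directly to the right-hand side of \eqref{eq:main_aff_tech}, and for the Euler block combines the SDF \eqref{eq:main_sdf}, the certainty equivalent \eqref{eq:main_certainty_eq} and the selection matrices \eqref{eqn:EndoVarSelection} before splitting into expectation and entropy terms; your explicit elimination of $\mathbb{E}_t[v_{t+1}]-w_t$ through the entropy of $(1-\gamma)v_{t+1}$, with the $\gamma\neq 1$ caveat, just spells out what the paper's one-line proof leaves implicit.
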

\begin{proof}
Equation (\ref{eq:main_aff_tech-1}) is obtained from applying
Equation (\ref{eq:app_main_Hdecomp}) to the right-hand side of Equation
(\ref{eq:main_aff_tech}). Equation (\ref{eq:main_Euler-1}) results
from applying Equations (\ref{eq:main_sdf}), (\ref{eq:main_certainty_eq})
and (\ref{eqn:EndoVarSelection}) to the Euler Equation (\ref{eq:main_Euler}), and finally using Equation (\ref{eq:app_main_Hdecomp}) to obtain
the terms with the $\mathbb{H}_{t}$ operator.
\end{proof}

Proposition \ref{prop:all_plugged_in-1} divides equilibrium conditions into expectational terms and terms that depend on higher moments (those with $\mathbb{H}_{t}$). Expectational terms are independent of risk aversion and risk parameters $\theta_0$. Risk aversion only shows up in one of these higher moment terms: the last term on the right hand side of Equation (\ref{eq:main_Euler-1}), while the high-order parameters $\theta_h$ only show up in the entropy terms. This suggests that some properties of $\mathbf{z}_{t}$ can be determined without $\gamma$ or $\theta_{h}$. 

Our main theorem formalizes this intuition. It follows from conjecturing a solution for $\mathbf{z}_{t}$ and applying the method of undetermined coefficients. 
\begin{thm}
\label{prop:main_theorem}
Suppose Assumptions \ref{def:cond_affine} and \ref{def:cond_affine_CGF} hold. Then, the  endogenous variables of the model characterized by equilibrium conditions \eqref{eq:main_aff_tech-1} and \eqref{eq:main_Euler-1}in Proposition \ref{prop:all_plugged_in-1} follow the process
\begin{eqnarray}
\mathbf{z}_{t}=z+Z_{z}\mathbf{z}_{t-1}+Z_{x}\mathbf{x}_{t}+Z_{h}\mathbf{h}_{t},\label{eq:main_aff_solution-1}
\end{eqnarray}
where $z$ is an $n_{z}$-vector, $Z_{z}$ is an $n_{z}\times n_{z}$ matrix, $Z_{x}$ is an $n_{z}\times n_{x}$ matrix, and $Z_{h}$ is an $n_{z}\times n_{h}$ matrix, and 
\begin{center}
$Z_{z}$ and $Z_{x}$ do not depend on $\gamma$ or $\theta_{h}$ .
\end{center}
\end{thm}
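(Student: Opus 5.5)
We conjecture the affine solution \eqref{eq:main_aff_solution-1} and verify it by undetermined coefficients.

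The first step is to compute conditional moments of linear combinations of $\mathbf{z}_{t+1}$ under the conjecture. Iterating the conjecture one period forward and substituting \eqref{eq:main_x_process} and \eqref{eq:main_h_process} gives, for any conforming matrix $D$,
\begin{align*}
D\mathbf{z}_{t+1} = D\left(z+Z_x\mu_x+Z_h\mu_h\right)+DZ_z\mathbf{z}_t+DZ_x\Phi_x\mathbf{x}_t+DZ_h\Phi_h\mathbf{h}_t+DZ_x\epsilon_{x,t+1}+DZ_h\epsilon_{h,t+1}.
\end{align*}
The expectation $\mathbb{E}_t[D\mathbf{z}_{t+1}]$ is therefore the deterministic part of this expression. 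By the independence assumption \eqref{eq:independent_hx} and Assumption \ref{def:cond_affine_CGF}, the entropy term is
\begin{align*}
\mathbb{H}_t[D\mathbf{z}_{t+1}] = \tilde g_x(DZ_x;\Theta)+\tilde g_h(DZ_h;\Theta)+\left[\tilde G_x(DZ_x;\Theta)+\tilde G_h(DZ_h;\Theta)\right]\mathbf{h}_t.
\end{align*}
The key observation is that every entropy term is affine in $\mathbf{h}_t$ alone. It contains no $\mathbf{z}_t$, $\mathbf{z}_{t-1}$, or $\mathbf{x}_t$, because the shock distributions depend only on $\mathbf{h}_t$. The same computation applies with $D^1_x\mathbf{x}_{t+1}$ added, and with $D=(1-\gamma)A_v$.

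The second step substitutes these expressions into \eqref{eq:main_aff_tech-1} and \eqref{eq:main_Euler-1}, and then substitutes $\mathbf{z}_t$ itself using the conjecture. Both sides become affine in $(1,\mathbf{z}_{t-1},\mathbf{x}_t,\mathbf{h}_t)$, so we match coefficients. Since the entropy terms contribute only to the constant and $\mathbf{h}_t$ coefficients, the $\mathbf{z}_{t-1}$ and $\mathbf{x}_t$ blocks involve only the expectation terms. For the constraints, the coefficient on $\mathbf{z}_{t-1}$ gives
\begin{align*}
B^1_zZ_z+B^1_{zl}=D^1_zZ_z^2,
\end{align*}
and the coefficient on $\mathbf{x}_t$ gives
\begin{align*}
B^1_zZ_x+B^1_x=D^1_z(Z_zZ_x+Z_x\Phi_x)+D^1_x\Phi_x.
\end{align*}
For the Euler block, the same equations hold with the left side set to zero, $D^{2*}_z$ in place of $D^1_z$, and $D^1_x=0$. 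The matrices involved, $B^1,D^1(\theta_0)$, $D^{2*}_z=A_{m^*}(\theta_0)+A_r(\theta_0)$, and $\Phi_x\in\theta_0$, all depend only on $\theta_0$. It matters here that the Euler expectation term uses $A_{m^*}$, not $A_m$; this is exactly what Proposition \ref{prop:all_plugged_in-1} delivered. Stacking the two blocks yields $n_z$ equations.

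The third step concludes. The $\mathbf{z}_{t-1}$ equation is a matrix quadratic in $Z_z$ with $\theta_0$-only coefficients, so its selected (stable) solution depends only on $\theta_0$. Given $Z_z$, the $\mathbf{x}_t$ equation is linear (Sylvester-type) in $Z_x$, again with $\theta_0$-only coefficients, so $Z_x$ depends only on $\theta_0$. The remaining equations, from the constant and $\mathbf{h}_t$ coefficients, involve $\gamma$ and $\theta_h$ through $\tilde g$ and $\tilde G$. They determine $Z_h$ and $z$ given $Z_z$ and $Z_x$. We do not need to solve them explicitly, but we should note that they are consistent with the conjecture, which confirms its form.

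The main obstacle is uniqueness and selection. The quadratic for $Z_z$ can have multiple solutions, so the statement should be read as saying that the standard selection criterion (e.g.\ Blanchard–Kahn stability) is itself a function of the $\theta_0$-only system. One more point needs care: $\gamma$ or $\theta_h$ could conceivably enter the $\mathbf{z}_{t-1}$ and $\mathbf{x}_t$ blocks through the $\tilde G$ functions evaluated at $DZ_h$. They do not, because $\tilde G$ multiplies only $\mathbf{h}_t$. I would spell this point out explicitly.
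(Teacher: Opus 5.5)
Your proposal is correct and takes the same route as the paper's proof. The paper likewise conjectures the affine law of motion and shows (Lemma~\ref{prop:Hzx}) that every entropy term is affine in $\mathbf{h}_t$ alone, using independence of $\epsilon_{x,t+1}$ and $\epsilon_{h,t+1}$ and Assumption~\ref{def:cond_affine_CGF}. It then stacks the constraint and Euler blocks and matches coefficients, so that $Z_z$ solves a $\theta_0$-only matrix quadratic, $Z_x$ solves a $\theta_0$-only Sylvester equation, and $\gamma$ and $\theta_h$ enter only the $Z_h$ and constant layers. Your remarks on selecting the root of the quadratic, and on the $Z_h\mu_h$ term in the constant (which the paper's expansion omits), are, if anything, more careful than the paper.
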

The proof is in Appendix \ref{subsec:appendix_abstract}.

Theorem \ref{prop:main_theorem} says that risk aversion $\gamma$ and higher moment parameters $\theta_{h}$ are irrelevant for the elasticity of endogenous variables $\mathbf{z}_{t}$ with respect to first moment states $\text{\textbf{z}}_{t-1}$ and $\text{\textbf{x}}_{t}$.

\subsection{Extensions of the Irrelevance Theorem\label{subsec:TheoremExtension}}

Irrelevance applies to a broader set of preferences beyond EZ. In this section, we illustrate how to extend our framework to nest these and other preferences.

First, the set of model parameters is now defined more broadly as
\begin{align}
\Theta\equiv & \{\Lambda,\theta_{h},\theta_{0}\},\label{eq:general_Theta}
\end{align}
where $\Lambda$ is a \emph{vector }of generalized risk parameters that are not in $\theta_{0}$. By generalized risk, we mean $\Lambda$ includes parameters that control attitudes toward ambiguity or other attitudes not captured by Arrow-Pratt risk aversion. Second, the stochastic discount factor in Equation (\ref{eq:main_sdf}) is replaced with
\begin{equation}
m_{t+1}=m_{t+1}^{*}+\sum_{i=1}^{n_{i}}m_{i,t+1}.\label{eq:general_sdf}
\end{equation}
where $m_{t+1}^{*}$ is defined as before, and the risk adjustments $m_{i,t+1}$ are given by
\begin{equation}
m_{i,t+1}=\tilde{f}_{i}(\Lambda,\theta_{0})A_{i}\text{\textbf{z}}_{t+1},\label{eq:general_sdf-1}
\end{equation}
for all $i\in\{1,...,n_{i}\},$ and $A_{i}$ is a $1\times n_{i}$ selection vector to obtain the relevant state variables from $\text{\textbf{z}}_{t+1}$. Third, the certainty equivalent Equation (\ref{eq:main_certainty_eq}) is replaced with
\begin{equation}
\mathbb{E}_{i,t}[A_{i}\text{\textbf{z}}_{t+1}]=-\mathbb{H}_{i,t}[\hat{f}_{i}(\Lambda,\theta_{0})A_{i}\text{\textbf{z}}_{t+1}],\label{eq:general_sdf-1-1}
\end{equation}
where the subscripts $i$ on the expectation and entropy terms indicate that these terms may use different probability measures (e.g., subjective beliefs), and may not necessarily use the objective measure.

The extensions (\ref{eq:general_Theta})-(\ref{eq:general_sdf-1-1}) maintain the key property that leads to irrelevance. As described in Section \ref{subsec:ez_Irrelevance}, the key property is that the risk adjustments are solely functions of the higher-moment variables $\text{\textbf{h}}_{t}$. To see this, the extension implies that Equation (\ref{eq:main_Euler-1}) in Proposition \ref{prop:all_plugged_in-1} is generalized as
\begin{align}
\boldsymbol{0}_{n_{m}}= & \mathbb{E}_{t}[D_{z}^{2^{*}}(\theta_{0})\mathbf{z}_{t+1}]+\mathbb{H}_{t}\left[D_{z}^{2}(\theta_{0})\mathbf{z}_{t+1}\right]-\sum_{i=1}^{n_{i}}\left(\frac{\tilde{f}_{i}(\Lambda,\theta_{0})}{\hat{f}_{i}(\Lambda,\theta_{0})}\right)\mathbb{H}_{i,t}\left[\hat{f}_{i}(\Lambda,\theta_{0})A_{i}\mathbf{z}_{t+1}\right]\boldsymbol{1}_{n_{m}},\label{eq:main_Euler-1-2}
\end{align}
where $D_{z}^{2}(\theta_{0})$ is now defined as $D_{z}^{2}(\theta_{0})\equiv A_{m^{*}}(\theta_{0})+\sum_{i=1}^{n_{i}}A_{i}+A_{r}(\theta_{0})$. Using this generalization, the logic of Theorem \ref{prop:main_theorem} remains valid.

To illustrate this extension we present a couple of examples:
\begin{example}
(Recursive Preferences) It is easy to show that the recursive preference
specification corresponds to the particular case $\Lambda=\gamma$,
$\tilde{f}_{1}(\Lambda,\theta_{0})=\frac{1}{\psi}-\gamma$,$\hat{f}_{1}(\Lambda,\theta_{0})=1-\gamma$,
and $A_{1}\text{\textbf{z}}_{t+1}=v_{t+1}-w_{t}$, where expectations
in Equation (\ref{eq:general_sdf-1-1}) are expectations under the
objective measure.
\end{example}

\begin{example}
(Generalized Smooth Ambiguity) As in \citet{ju2012ambiguity}, the
stochastic discount factor is
\begin{equation}
m_{t+1}=m_{t+1}^{*}+\left(\frac{1}{\psi}-\gamma\right)(v_{t+1}-w_{\pi_{s},t})+\left(\frac{1}{\psi}-\eta\right)(w_{\pi_{s},t}-w_{t}),\label{eq:main_sdf-1}
\end{equation}
where $\eta$ is the coefficient of ambiguity aversion,
\begin{align}
w_{t} & =\frac{1}{1-\eta}\log\mathbb{E}_{\mu,t}\left[\exp((1-\eta)w_{\pi_{s},t})\right],\label{eq:main_certainty_eq-1}
\end{align}
and
\begin{align}
w_{\pi_{s},t} & =\frac{1}{1-\gamma}\log\mathbb{E}_{\pi_{s},t}\left[\exp((1-\gamma)v_{t+1})\right].\label{eq:main_certainty_eq-1-1}
\end{align}
The expectation in Equation (\ref{eq:main_certainty_eq-1}) is obtained under the belief system $\mu$ of the representative agent over a set of probability distributions denoted by $\pi_{s}$ for a hidden state $s$, while the expectation in Equation (\ref{eq:main_certainty_eq-1-1}) is under the distribution $\pi_{s}$.

Equations (\ref{eq:main_sdf-1})-(\ref{eq:main_certainty_eq-1-1}) can be written as Equations (\ref{eq:general_sdf})-(\ref{eq:general_sdf-1-1}) where $\Lambda=\{\gamma,\eta\}$, $\tilde{f}_{1}(\Lambda,\theta_{0})=\frac{1}{\psi}-\gamma$, $\hat{f}_{1}(\Lambda,\theta_{0})=1-\gamma$, $A_{1}\text{\textbf{z}}_{t+1}=v_{t+1}-w_{\pi_{s},t}$, $\tilde{f}_{2}(\Lambda,\theta_{0})=\frac{1}{\psi}-\eta$, $\hat{f}_{2}(\Lambda,\theta_{0})=1-\eta$, and $A_{2}\text{\textbf{z}}_{t+1}=w_{\pi_{s},t}-w_{t}$. Irrelevance holds if belief dynamics are exogenous, as in \citet{ilut2014ambiguous}. \end{example}

\section{Irrelevance and Global Solution Methods\label{sec:Numerical-Verification}}

Approximate linearity of constraints is a key assumption of our theorem. The framework also assumes that the CGF of shocks is conditionally affine. One may wonder about the fragility of these assumptions. To address this question, this section examines irrelevance in two classes of models that are solved numerically using projection methods (\citealt{judd1992projection}). Both classes build on the model in Section \ref{sec:tallarini_extension}.

The first class of models builds on a fixed-labor version of \citet{tallarini2000risk}.\footnote{\citet{tallarini2000risk} uses a second-order Taylor expansion solution
(based on \citet{christiano1990solving}) which implies constant risk
premiums. The projection method allows us to capture time variation
in risk premiums.} The second class of models introduces additional non-linearities in the model structure as well as skewness and fat-tailed shocks. The additional non-linearity comes from lower EIS and larger capital adjustment costs, which have been shown to produce more volatile equity returns and a larger equity premium (see, for instance, \citet{jermann1998asset}, \citet{guvenen2009parsimonious}, \citet*{kaltenbrunner2010long}, \citet{papanikolaou2011investment}, \citet{chen2017external}). The skewed and fat-tailed shocks come in the form of gamma-distributed productivity shocks, calibrated to be consistent with the empirical evidence of fat tails in \citet*{bekaert2017asset}.

\subsection{Additional Model Structure and Common Parameters\label{subsec:Additional-Structure-and}}

To allow for skewed and fat-tailed shocks, we generalize the model in Section \ref{sec:tallarini_extension} such that $\epsilon_{x,t+1}$ follows a zero-mean left-skewed gamma distribution
\begin{align}
\epsilon_{x,t+1}= & \zeta\theta_{t}-\text{gamrv}(\zeta,\theta_{t}),\label{eq:num_tvgamma1}
\end{align}
where $\zeta$ is the gamma shape parameter, and $\theta_{t}$ is the time-varying gamma scale parameter. This parameter is linked to the higher moment state $h_{t}$ according to

\begin{align}
\theta_{t}\equiv & \frac{\sigma_{x}}{\sqrt{\zeta}}\exp(h_{t}).\label{eq:num_tvgamma2}
\end{align}
Using properties of the gamma distribution, the conditional standard deviation of the productivity shock is $\sigma_{x}\exp(h_{t}),$ and thus $h_{t}$ can be interpreted as the log-deviation of volatility from its steady state.

Equations (\ref{eq:num_tvgamma1}) and (\ref{eq:num_tvgamma2}) imply that the skewness and kurtosis of productivity shocks are
\begin{align}
\text{Skew}_{t}(\epsilon_{x,t+1})= & -\frac{2}{\sqrt{\zeta}},\quad\text{{and} \  }\text{Kurt}_{t}(\epsilon_{x,t+1})=3+\frac{6}{\zeta},\label{eq:struct_skew_kurt}
\end{align}
respectively, implying left-skewness and fat tails. It can be shown that, as $\zeta\longrightarrow\infty$, skewness goes to zero and kurtosis goes to 3, leading to a normal distribution.

Panel A of Table \ref{tab:param} lists parameters that are common across specifications for the two classes of models we analyze, which facilitates comparisons across models. Most of these parameter values are standard in the literature. For example, we choose the volatility of productivity to imply that HP-filtered output has a volatility of about 1.5\% per quarter, close to values estimated from data. The baseline risk aversion $\gamma=1$ and the baseline volatility of volatility $\sigma_{h}=0$ correspond to log utility and homoskedastic shocks, respectively. We vary these two parameters to analyze models in Sections \ref{subsec:Class-1:-Frictionless} and \ref{subsec:Class-2:-Large}, but use these baseline values when the given parameter is not the focus.

\begin{table}[th]
\caption{Parameter Values for the Numerical Exercises}
\label{tab:param}

\begin{singlespace}
\noindent All exercises use the equations for the model in Section
\ref{sec:tallarini_extension} with the additional structure described in Section
\ref{subsec:Additional-Structure-and}. The model parameters are quarterly.
The baseline value for risk aversion is the value used when volatility
of volatility varies in Figures \ref{fig:tallarini} and \ref{fig:intermediate},
and conversely the baseline value of volatility of volatility is used
when risk aversion varies.
\end{singlespace}

\begin{centering}
\vspace{4ex}
\par\end{centering}
\begin{centering}
\begin{tabular}{lcc}
\toprule
\multicolumn{3}{c}{Panel A: Common Parameters}\tabularnewline
\midrule
Time Preference & $\beta$ & 0.99\tabularnewline
Capital Share & $\alpha$ & 0.35\tabularnewline
Depreciation Rate & $\delta$ & 0.02\tabularnewline
 &  & \tabularnewline
Volatility of Productivity & $\sigma_{x}$ & 0.014\tabularnewline
Persistence of Productivity & $\phi_{x}$ & 0.95\tabularnewline
Persistence of Volatility & $\phi_{h}$ & 0.84\tabularnewline
 &  & \tabularnewline
Baseline Risk Aversion & $\gamma$ & 1\tabularnewline
Baseline Volatility of Volatility & $\sigma_{h}$ & 0\tabularnewline
\bottomrule
\end{tabular}
\par\end{centering} \vspace{4ex}

\centering{}%
\begin{tabular}{lcccc}
\toprule
\multicolumn{5}{c}{Panel B: Parameters That Vary Across Classes of Models}\tabularnewline
\midrule
 &  & Class 1 &  & Class 2\tabularnewline
Elasticity of Intertemporal Substitution & $\psi$ & 1 &  & 0.30\tabularnewline
Elasticity of Investment & $\xi$ & 10 &  & 3\tabularnewline
Gamma Shape & $\zeta$ & 600 &  & 6\tabularnewline
\bottomrule
\end{tabular}
\end{table}

The persistence of volatility does not have a standard value in the literature. Some papers find volatility to be nearly a random walk (e.g., \citet*{bansal2004risks}, \citet*{bansal2012empirical}), while others imply high but far-from random walk persistence (e.g., \citet*{bansal2005interpretable}, \citet{bloom2009impact}, \citet{chen2017external}). Our chosen persistence of 0.84 is helpful to ensure that our numerical solution both converges and is accurate, as near-random walk persistence is known to lead to large approximation errors as shown by \citet*{pohl2018higher}.

Panel B of Table \ref{tab:param} describes the parameters that differ across our two classes of models. We discuss these parameters further in Sections \ref{subsec:Class-1:-Frictionless} and \ref{subsec:Class-2:-Large}.

\subsection{Projection Solution Method and Model Simulation}

We solve the model using projection methods as in \citet{judd1992projection} and \citet*{caldara2012computing}. As shown by \citet*{caldara2012computing}, projection methods offer the highest accuracy of a number of solution methods. This method can be computationally slow and it can be difficult to find an initial guess that leads to convergence. To deal with this limitation, we use a homotopy method as in \citet{chen2017external}.

We begin by discretizing the volatility process using \citet{rouwenhorst1995asset}, followed by discretizing productivity using \citet{tauchen1986finite}. The Rouwenhorst method has been shown to be the most accurate method for approximating AR(1) processes (\citet{kopecky2010finite}), but it cannot accommodate time-varying volatility, leading us to use Tauchen for productivity. We then use cubic splines to approximate the law of motion for capital. Breakpoints are linearly spaced for (log) capital, with standard knot-averaging nodes.

To solve the model, we use Broyden's method to find cubic spline coefficients that satisfy both the firm's Euler equation and the household's value recursion. The model is considered solved if the maximum error on the collocation nodes is less than $1.48\times10^{-8}.$

As our two classes of models differ strongly in their non-linearity and non-normality, we choose the degree of numerical approximation (number of gridpoints) to match the task at hand. For models with normal shocks we use 25 gridpoints for productivity while for models with skewed shocks we use 71 gridpoints for productivity. Appendix \ref{subsec:Nx_demo} shows that such a large amount of gridpoints is necessary for the Tauchen approximation to produce moments that are similar to those of the continuous gamma distribution. To accommodate the large number of gridpoints for productivity, both models use 5 gridpoints for volatility and 8 gridpoints for capital.\footnote{Our solution makes heavy use of the \citet{miranda2001applied} COMPECON
toolbox. Code for the solution is available at https://sites.google.com/site/chenandrewy/.}

To study irrelevance, we simulate model dynamics under a wide range of values for risk aversion and the volatility of volatility. In each quarter of each simulation, we calculate the following variables of interest: the conditional equity premium, the elasticity of the stock price with respect to productivity ($\partial s_{t}/\partial x_{t}$), and the elasticity of the stock price with respect to volatility ($\partial s_{t}/\partial h_{t}$). Finally, taking the 25th, 50th, and 75th percentiles within each set of simulations provides a sense of the typical range of values for each variable. 

\subsection{Class 1: Frictionless Models with Normal Shocks (a la Tallarini 2000)\label{subsec:Class-1:-Frictionless}}

We begin by examining a class of models in the spirit of \citet{tallarini2000risk}, using the model parameterization in Table \ref{tab:param}. Given the EIS of 1, the elasticity of investment of 10 is chosen to lead to a relative volatility of consumption of about 0.50\%. Similarly, the gamma shape parameter of 600 leads to normally-distributed productivity shocks, as in Tallarini (see Equation (\ref{eq:struct_skew_kurt})).

The simulation results are shown in Figure \ref{fig:tallarini}. The top row illustrates how our endogenous variables of interest vary as the degree of risk aversion changes. For this analysis, we hold the volatility of volatility constant at 0 (Table \ref{tab:param}, Panel B). In line with Theorem \ref{prop:main_theorem}, risk aversion has essentially no effect on the elasticity of the stock price with respect to productivity $\partial s_{t}/\partial x_{t}$ (middle panel, top row). The median elasticity declines slightly as risk aversion increases, but this variation is negligible relative to the level of this elasticity. Notably, this irrelevance holds even if risk aversion is 100.

\begin{figure}[th]
    \centering{}\caption{\textbf{\label{fig:tallarini}{\color{ChadBlue}Numerical Check for Irrelevance - Model Class 1: Frictionless Investment and Conditionally Normal  Shocks (a la Tallarini 2000).}}
    Models' equations are described in Sections \ref{sec:tallarini_extension} and
    \ref{subsec:Additional-Structure-and}. Parameter values that are
    held fixed are described in Table \ref{tab:param}. We vary risk aversion
    $\gamma$ and the conditional volatility of volatility $\sigma_{h}$,
    and simulate the model, calculating variables of interest in each
    quarter of each simulation. The charts show the 25th, 50th, and 75th
    percentiles within each simulation. Vol of vol is plotted in terms
    of its unconditional value implied by its AR(1) process. Models are
    solved using projection methods.}
    \includegraphics[width=1\textwidth]{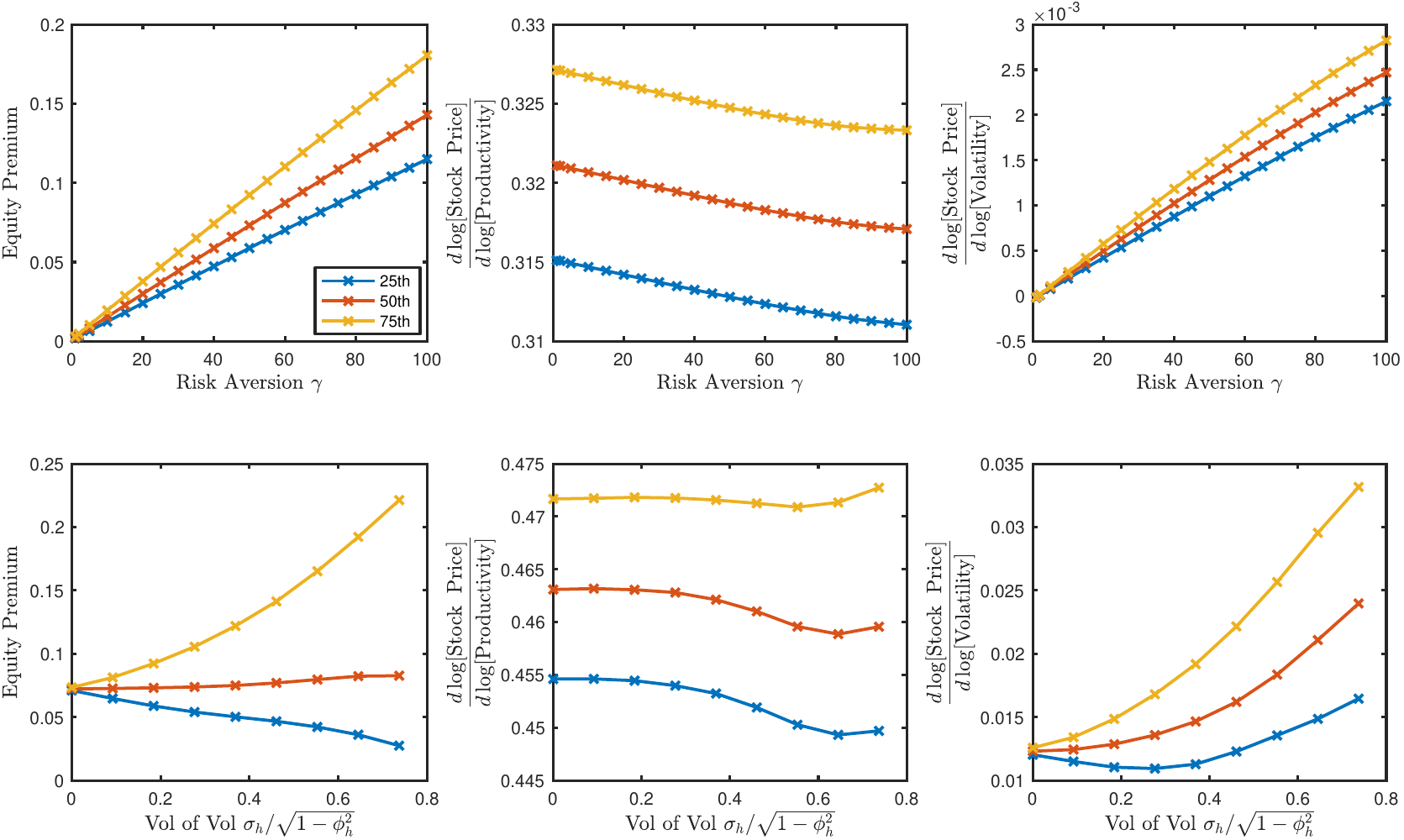}
\end{figure}

In contrast, risk aversion has notable effects on the equity premium and the elasticity of the stock price with respect to volatility $\partial s_{t}/\partial h_{t}$ (left and right panels, top row). Unlike $\partial s_{t}/\partial x_{t}$ , these two variables are not subject to Theorem \ref{prop:main_theorem}, which only applies to the elasticity of variables with respect to first moment states like productivity. Indeed, the Theorem suggests that both the equity premium and $\partial s_{t}/\partial h_{t}$ do, in principle, depend on risk aversion. Whether or not a notable effect is seen in a numerical solution, however, depends on the details of the model.

The bottom panels of Figure \ref{fig:tallarini} illustrate how aspects of the model solution change as time-varying risk changes. Risk aversion in these panels is held constant at 1 (Table \ref{tab:param} Panel B). The results are again consistent with Theorem \ref{prop:main_theorem}. Time-varying volatility is irrelevant for $\partial s_{t}/\partial x_{t}$. The median elasticity declines slightly as vol of vol increases, but the decrease is negligible, even if the unconditional volatility of volatility is 80\%. Conversely, the equity premium and $\partial s_{t}/\partial h_{t}$ change as vol of vol increases.

Overall, these results clarify and extend Tallarini's (2000) seminal paper. Tallarini's irrelevance result is often cited as a separation between macroeconomics and finance variables. The top-middle panel of Figure \ref{fig:tallarini} shows, however, that risk aversion is also irrelevant for stock price volatility in Tallarini's setting. Rather, the separation is between means (like the equity premium) and volatilities (like stock price volatility), as long as volatility is held constant, consistent with our theoretical results.

\subsection{Class 2: Large Investment Frictions and Gamma Productivity Shocks
\label{subsec:Class-2:-Large}}

The models studied in Section \ref{subsec:Class-1:-Frictionless} feature extreme risk aversion or elevated volatility of volatility, but they still have a limited role for risk in the sense that the equity premium is small as well as the elasticity of stock prices to productivity. Moreover, the normally distributed shocks in this class of models implies that our affine approximation of the CGF holds exactly. Here, we study a class of models that produces a more economically significant risk premium and has skewed, fat-tailed shocks. We examine models with an inelastic investment parameter $\xi$ of 3, a low EIS of 0.3, and a gamma shape parameter $\zeta$ of 6 (Table \ref{tab:param}, Panel B), which implies a significant deviation from a normal distribution.

Large capital adjustment costs help produce a large equity premium as shown in \citet{jermann1998asset}, \citet{guvenen2009parsimonious}, \citet*{kaltenbrunner2010long}, and \citet{chen2017external}, among others. This result is a natural consequence of Q-theory, which shows that stock price volatility increases with the adjustment cost, all else being equal. High adjustment costs, however, tend to lead to excessively smooth investment and excessively volatile consumption. Thus, the aforementioned models feature a low EIS, which helps keep consumption smooth. 

Our choice of an EIS of 0.30 matches the EIS of stockholders in \citet{guvenen2009parsimonious}. Given this EIS, we find that in our model an elasticity of investment of 3 leads to a relative volatility of consumption of about 0.50\%.\footnote{Guvenen's model has a much smaller elasticity of investment of 0.40,
but his model is a richer one with heterogeneous agents. Indeed, the
stockholders in his model have more than double the consumption volatility
of non-stockholders.} Lower values for the EIS can be justified using the classical consumption Euler equation regressions of \citet{hall1988intertemporal}, but such values are rarely found in macroeconomic models.

Our gamma shape parameter of 6 is chosen to be consistent with the empirical evidence in \citet*{bekaert2017asset}. The shape parameter of 6 implies a kurtosis of 4 and a skewness of -0.82, similar to Bekaert and Engstrom's finding that quarterly consumption growth has a kurtosis of 4.04 and a skewness of -0.399.

As shown in Figure \ref{fig:intermediate}, this class of models produces an equity premium that is more than 3 times larger than that implied by our first class (top left panel). Similarly, the elasticity of stock prices with respect to productivity $\partial s_{t}/\partial x_{t}$ is about 3 times larger (top middle). Despite the much larger risk premium, this class of models continues to display irrelevance: risk aversion and time-varying risk have essentially no effect on $\partial s_{t}/\partial x_{t}$ (middle panels). In contrast, risk aversion and time-varying risk do have an effect on average values like the equity premium (left panels) or on how the model responds to volatility (right panels).

\begin{figure}[th]
    \centering{}\caption{\textbf{\label{fig:intermediate}{\color{ChadBlue}Numerical Check for Irrelevance - Model Class 2: Large Investment Frictions and Negatively-Skewed Gamma Shocks.}}
    Models' equations are described in Sections \ref{sec:tallarini_extension} and
    \ref{subsec:Additional-Structure-and}. Parameter values that are
    held fixed are described in Table \ref{tab:param}. We vary risk aversion
    $\gamma$ and the conditional volatility of volatility $\sigma_{h}$,
    and simulate the model, calculating variables of interest in each
    quarter of each simulation. The charts show the 25th, 50th, and 75th
    percentiles within each simulation. Vol of vol is plotted in terms
    of its unconditional value implied by its AR(1) process. Models are
    solved using projection methods.}
    \includegraphics[width=1\textwidth]{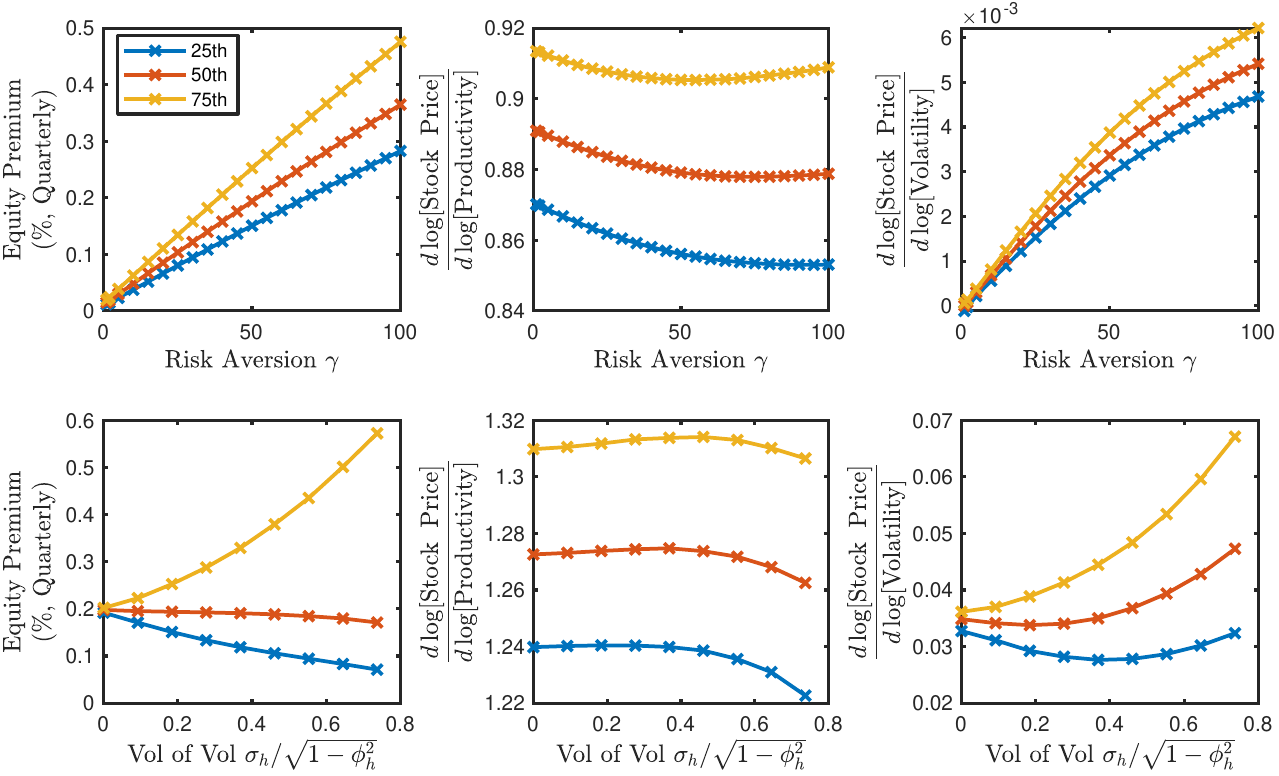}
\end{figure}

This exercise shows that irrelevance also holds in a model with non-linearities and fat tails that are consistent with empirical evidence. These results suggest that the irrelevance result obtained using an affine approximation is also displayed in numerical solutions. This finding is consistent with findings in a wide variety of models in the literature that exhibit irrelevance such as \citet*{rudebusch2012bond}, \citet{croce2014investor}, and \citet*{petrosky2018endogenous}. 

However, as highlighted in Section \ref{sec:ezmodel}, this result does not extend to models with strong non-linearities such as \cite{brunnermeier2014macroeconomic} and \cite{GourioNgo2020ZLB} for which affine approximations are inaccurate. A common feature of these models is occasionally binding constraints, which is missing from Figures \ref{fig:tallarini} and \ref{fig:intermediate}. 

\clearpage
\section{Conclusion\label{sec:Conclusion}}

\noindent We provide an irrelevance theorem that sheds light on the role of risk in business cycle modeling. If a model exhibits (1) separation of intertemporal and risk preferences, (2) separation of the drivers of first and higher moments in the remaining economic structure, and (3) approximate linearity of constraints, then risk is irrelevant for the elasticity of any variable with respect to first moment states. We prove the theorem by separating the equilibrium conditions into expectational terms and high-order terms, and showing that risk aversion and risk parameters are absent in the expectational equations and then only drive high-order terms. Irrelevance holds for any variable, including asset prices, clarifying and extending the \citet{tallarini2000risk} separation result. 

The theorem also provides a benchmark for understanding how economic mechanisms in salient models in the literature give a more meaningful role to risk and risk attitudes in business cycle modeling. One can either ``break'' irrelevance with economic mechanisms that remove one of the three key properties above. Or ``adapt'' to it, by driving business cycles with shocks to higher moments instead of first moments. 

Either way, our theorem suggests that meaningful risk modeling in production economies is nontrivial. One must either model interactions between intertemporal and risk tradeoffs, incorporate strong nonlinearities, or adapt models to incorporate mechanisms or frictions that avoid the \citet{barro1984time} comovement result.

\newpage{}

\appendix


\section{Derivations for the Dynamic Model (Section \ref{sec:tallarini_extension}) \label{subsec:Appendix_ezmodel}}

\subsection{Equilibrium Conditions}

First let's derive the equilibrium conditions. The optimality of firm investment implies marginal Q and the investment return are
\begin{align*}
Q_{t}= & \frac{1}{\Phi'(I_{t}/K_{t})},\\
R_{i,t+1}= & Q_{t}^{-1}\left\{ \alpha X_{t+1}K_{t+1}^{-(1-\alpha)}+Q_{t+1}\left[\Phi(I_{t+1}/K_{t+1})+1-\delta\right]-\frac{I_{t+1}}{K_{t+1}}\right\} ,
\end{align*}
respectively, where the investment return satisfies the Euler Equation
\begin{align*}
1= & \mathbb{E}_{t}\left(M_{t+1}R_{i,t+1}\right),\\
M_{t+1}= & \beta\left(\frac{C_{t+1}}{C_{t}}\right)^{-\frac{1}{\psi}}\left(\frac{V_{t+1}}{P_{t}}\right)^{\frac{1}{\psi}-\gamma}.
\end{align*}
Since production is homogeneous of degree 1, Tobin's Q theory holds, and then the ex-dividend stock price is
\begin{align*}
S_{t}= & Q_{t}K_{t}.
\end{align*}

Collecting our equilibrium conditions together, we have
\begin{align}
C_{t}+I_{t}= & X_{t}K_{t}^{\alpha}\label{eq:app_ez_rc}\\
K_{t+1}= & (1-\delta)K_{t}+\Phi(I_{t}/K_{t})K_{t}\label{eq:app_ez_cap}\\
Q_{t}= & \frac{1}{\Phi'(I_{t}/K_{t})}\label{eq:app_ez_Q}\\
R_{i,t+1}= & Q_{t}^{-1}\left\{ \alpha X_{t+1}K_{t+1}^{-(1-\alpha)}+Q_{t+1}\left[\Phi(I_{t+1}/K_{t+1})+1-\delta\right]-\frac{I_{t+1}}{K_{t+1}}\right\} \label{eq:app_ez_rinv}\\
V_{t}= & \left\{ (1-\beta)C_{t}^{1-1/\psi}+\beta P_{t}{}^{1-1/\psi}\right\} ^{1/(1-1/\psi)}\label{eq:app_ez_v_recur}\\
P_{t}= & \left[\mathbb{E}_{t}(V_{t+1}^{1-\gamma})\right]^{1/(1-\gamma)}\label{eq:app_ez_ce}\\
M_{t+1}= & \beta\left(\frac{C_{t+1}}{C_{t}}\right)^{-\frac{1}{\psi}}\left(\frac{V_{t+1}}{P_{t}}\right)^{\frac{1}{\psi}-\gamma}.\label{eq:app_ez_SDF}\\
1= & \mathbb{E}_{t}\left(M_{t+1}R_{i,t+1}\right)\label{eq:app_ez_Euler}\\
S_{t}= & Q_{t}K_{t}.\label{eq:app_ez_stock}
\end{align}

\subsection{Affine Approximation\label{subsec:app_ezmodel_Affine-Approximation}}

We express all variables in terms of log deviations from an approximation point. For example, for productivity $X_{t}$ we define an approximation point $\bar{X}$ and a log deviation $x_{t}$ so that $X_{t}=\bar{X}e^{x_{t}}$. This approximation point is not necessarily the non-stochastic steady state. Nevertheless, we choose the approximation points strategically, so that the algebra simplifies.

Also, for convenience, we define
\begin{align*}
\overline{CY}\equiv & \frac{\bar{C}}{\bar{Y}},
\end{align*}
and other similar variables $\overline{IY}$, $\overline{YK}$, etc. are defined similarly.

First, we deal with the expectational equations. We do not approximate these equations in order to preserve risk terms in the model. Nevertheless, we rewrite these equations in terms of log-deviations from an approximation point to make the form match the approximated equations. Specifically, the certainty equivalent (\ref{eq:app_ez_ce}) becomes
\begin{align*}
\bar{W}^{(1-\gamma)}e^{(1-\gamma)w_{t}}= & \bar{V}^{(1-\gamma)}\mathbb{E}_{t}e^{(1-\gamma)v_{t+1}}\\
\Rightarrow(1-\gamma)w_{t}=\log & \left(\frac{\bar{W}}{\bar{V}}\right)^{-(1-\gamma)}+\log\mathbb{E}_{t}\exp\left[(1-\gamma)v_{t+1}\right]\\
\Rightarrow w_{t}=-\log & \left(\frac{\bar{W}}{\bar{V}}\right)+\frac{1}{(1-\gamma)}\log\mathbb{E}_{t}\exp\left[(1-\gamma)v_{t+1}\right].
\end{align*}
Note that in an approximation around the non-stochastic steady state, $\bar{W}=\bar{V}$ and the intercept term disappears, but we keep this intercept to be clear about the generality of our results.

Similarly the stochastic discount factor (\ref{eq:app_ez_SDF}) is simply rearranged rather than approximated
\begin{align*}
\bar{M}e^{m_{t+1}}= & \beta\frac{\bar{C}}{\bar{C}}e^{-\psi^{-1}\Delta c_{t+1}}\frac{\bar{V}}{\bar{W}}e^{(\psi^{-1}-\gamma)(v_{t+1}-w_{t})}\\
\Rightarrow m_{t+1}= & \log\left(\frac{\beta\bar{V}}{\bar{M}\bar{W}}\right)-\psi^{-1}\Delta c_{t+1}+(\psi^{-1}-\gamma)(v_{t+1}-w_{t}),
\end{align*}
and the Euler equation is rearranged to be
\begin{align*}
0= & \log(\bar{M}\bar{R})+\log\mathbb{E}_{t}\exp\left(m_{t+1}+r_{t+1}\right).
\end{align*}

Second, we combine the technological constraints into a single linear expression. To do this, begin by applying Uhlig's approach to the resource constraint (\ref{eq:app_ez_rc})
\begin{align}
\bar{C}(1+c_{t})+\bar{I}(1+i_{t})= & \bar{Y}(1+x_{t}+\alpha k_{t}),\nonumber \\
\overline{CY}c_{t}+\overline{IY}i_{t}= & x_{t}+\alpha k_{t}.\label{eq:app_ez_cap2}
\end{align}
The tricky part is the linearization of the adjustment cost and capital accumulation equations. To linearize these equations, first we use a Taylor expansion of the adjustment cost function around the steady state investment rate $\delta$:
\begin{align*}
\Phi(I_{t}/K_{t})\approx & \bar{\Phi}+\Phi'(\delta)(I_{t}/K_{t}-\delta)\\
\Rightarrow\Phi_{t}= & \bar{\Phi}+I_{t}/K_{t}-\delta,
\end{align*}
where the second line defines $\Phi_{t}=\Phi(I_{t}/K_{t})$ and the fact that $\Phi'(\delta)=1$. Then, we apply Uhlig's log-linearization to $\Phi_{t},$ $I_{t}$, and $K_{t}$:
\begin{align}
\bar{\Phi}(1+\phi_{t})= & \bar{\Phi}+\delta(1+i_{t}-k_{t})-\delta\nonumber \\
\Rightarrow\phi_{t} & =i_{t}-k_{t}.\label{eq:app_ez_phi}
\end{align}
Now to finish up, we apply Uhlig's approach to capital accumulation:
\begin{align*}
\bar{K}(1+k_{t+1})= & (1-\delta)\bar{K}(1+k_{t})+\bar{\Phi}\bar{K}(1+\phi_{t}+k_{t}),
\end{align*}
and plug in our simple expression for $\phi_{t}$ to obtain
\begin{align*}
k_{t+1}= & (1-\delta)k_{t}+\delta i_{t}.
\end{align*}
Finally, plugging in the above equation into the linearized resource constraint (\ref{eq:app_ez_cap2}) gives the key technological constraint of the model
\begin{align}
\overline{CY}c_{t}= & x_{t}+\left[\alpha+\bar{KY}(1-\delta)\right]k_{t}-\bar{KY}k_{t+1}.\label{eq:app_ez_cap3}
\end{align}

Third, we linearize the marginal rates of transformation. The tricky part here is the linearization of Tobin's Q in the presence of the adjustment cost. To do this, we first use a Taylor expansion of the first derivative of the adjustment cost
\begin{align*}
\Phi'\left(I_{t}/K_{t}\right)\approx & \Phi'(\delta)+\Phi''(\delta)(I_{t}/K_{t}-\delta)\\
= & 1-\frac{\xi^{-1}}{\delta}(I_{t}/K_{t}-\delta).
\end{align*}
We plug the above into the marginal Q equation, and then apply Uhlig's log-linearizations:
\begin{align*}
Q_{t}^{-1}= & \Phi'(I_{t}/K_{t})\\
\Rightarrow\bar{Q}^{-1}(1-q_{t}) & =1-\frac{\xi^{-1}}{\delta}(I_{t}/K_{t}-\delta)\\
\Rightarrow q_{t}= & \xi^{-1}(i_{t}-k_{t}-\delta).
\end{align*}

Now we linearize the investment return (\ref{eq:app_ez_rinv}). We begin by moving $Q_{t}^{-1}$ to the LHS, and then apply Uhlig's approach, beginning by multiplying everything out
\begin{align*}
R_{i,t+1}Q_{t}= & \left\{ \alpha X_{t+1}K_{t+1}^{-(1-\alpha)}+Q_{t+1}\left[\Phi_{t+1}+1-\delta\right]-\frac{I_{t+1}}{K_{t+1}}\right\} \\
= & \alpha X_{t+1}K_{t+1}^{-(1-\alpha)}+Q_{t+1}\Phi_{t+1}+Q_{t+1}(1-\delta)-\frac{I_{t+1}}{K_{t+1}},
\end{align*}
and then apply Uhlig's rules for log-linearization
\begin{align*}
\bar{R}\bar{Q}(1+r_{i,t+1}+q_{t})= & \alpha\overline{YK}[1+x_{t+1}-(1-\alpha)k_{t+1}]\\
 & +\bar{Q}\bar{\Phi}(1+q_{t+1}+\phi_{t+1})\\
 & +\bar{Q}(1-\delta)(1+q_{t+1})-\delta(1+i_{t+1}-k_{t+1})\\
= & \alpha\overline{YK}[1+x_{t+1}-(1-\alpha)k_{t+1}]\\
 & +\delta(1+q_{t+1}+i_{t}-k_{t})\\
 & +(1-\delta)(1+q_{t+1})-\delta(1+i_{t+1}-k_{t+1})\\
= & \alpha\overline{YK}[1+x_{t+1}-(1-\alpha)k_{t+1}]+(1+q_{t+1}),
\end{align*}
where the second equality uses Equations (\ref{eq:app_ez_phi}) and the fact that $\Phi(\delta)=\delta$. Finally, we solve for $r_{i,t+1}$
\begin{align*}
\bar{R}r_{i,t+1}= & (\alpha\overline{YK}+1-\bar{R})\\
 & +\alpha\overline{YK}(x_{t+1}-(1-\alpha)k_{t+1})+q_{t+1}-\bar{R}q_{t}.
\end{align*}

To log-linearize the value recursion, we simply rearrange Equation (\ref{eq:app_ez_v_recur}) and apply Uhlig's rules:
\begin{align}
V_{t}^{(1-\psi^{-1})}= & (1-\beta)C_{t}^{1-\psi^{-1}}+\beta P_{t}^{1-\psi^{-1}}\nonumber \\
\Rightarrow\bar{V}^{1-\psi^{-1}}(1+(1-\psi^{-1})v_{t})= & (1-\beta)\bar{C}^{1-\psi^{-1}}(1+(1-\psi^{-1})c_{t})+\beta\bar{W}^{1-\psi^{-1}}(1+(1-\psi^{-1})w_{t})\nonumber \\
\Rightarrow\bar{V}^{1-\psi^{-1}}v_{t}= & (1-\beta)\bar{C}^{1-\psi^{-1}}c_{t}+\beta\bar{W}^{1-\psi^{-1}}w_{t},\label{eq:app_ez_v_recur_lin}
\end{align}
where the 3rd equality imposes the fact that the value recursion (\ref{eq:app_ez_v_recur}) holds at the approximation point. Note that we do not impose that the approximation point is the non-stochastic steady state (as in, say, \citealt{uhlig2010easy}). Imposing the non-stochastic steady state would simplify Equation (\ref{eq:app_ez_v_recur_lin}), but we skip this simplification to emphasize the generality of our results.

Putting the approximate equilibrium conditions together, we summarize the system as
\begin{align}
\overline{CY}c_{t}+\overline{IY}i_{t}= & x_{t}+\alpha k_{t}\label{eq:app_ezlin_rc}\\
k_{t+1}= & (1-\delta)k_{t}+\delta i_{t}\\
q_{t}= & \xi^{-1}(i_{t}-k_{t}-\delta)\label{eq:app_ezlin_q}\\
\bar{R}r_{i,t+1}= & (\alpha\overline{YK}+1-\bar{R})+\alpha\overline{YK}(x_{t+1}-(1-\alpha)k_{t+1})+q_{t+1}-\bar{R}q_{t}\label{eq:app_ezlin_rinv}\\
\bar{V}^{1-\psi^{-1}}v_{t}= & (1-\beta)\bar{C}^{1-\psi^{-1}}c_{t}+\beta\bar{W}^{1-\psi^{-1}}w_{t}\label{eq:app_ezlin_v_recur}\\
w_{t}=\log & \left(\frac{\bar{W}}{\bar{V}}\right)+\frac{1}{(1-\gamma)}\log\mathbb{E}_{t}\exp\left[(1-\gamma)v_{t+1}\right]\label{eq:app_ezlin_ce}\\
m_{t+1}= & \log\left(\frac{\beta\bar{V}}{\bar{M}\bar{W}}\right)-\psi^{-1}\Delta c_{t+1}+(\psi^{-1}-\gamma)(v_{t+1}-w_{t})\\
0= & \log(\bar{M}\bar{R})+\log\mathbb{E}_{t}\exp\left(m_{t+1}+r_{t+1}\right)\label{eq:app_ezlin_Euler}\\
s_{t}= & q_{t}+k_{t}\\
x_{t+1}= & \phi_{x}x_{t}+\epsilon_{x,t+1}\\
h_{t+1}= & \phi_{h}h_{t}+\epsilon_{h,t+1}\\
\log\mathbb{E}_{t}\exp(a\epsilon_{x,t+1})= & \frac{1}{2}a^2 \sigma_x^2 h_t \label{eq:app_ezlin_CGFx}\\
\log\mathbb{E}_{t}\exp(a\epsilon_{h,t+1})= & \frac{1}{2}a^{2}\sigma_{h}^{2}\label{eq:app_ezlin_CGFh}
\end{align}

\subsection{Solving the model\label{subsec:app_ezmodel_Solving-the-model}}

We solve the model using the method of undetermined coefficients. The overview of the solver is as follows:
\begin{enumerate}
\item Conjecture linear laws of motion (LOMs) for $s_{t}$ and $v_{t}$.
\item Substitute out consumption, investment, and $q_{t}$ from the Euler
equation and value recursion and replace with $s_{t}$.
\item Apply the LOMs until the Euler equation is written in terms of state
variables today.
\item Collect terms and note that terms multiplying each state variable
must be zero. This logic provides a system of equations that provide
the LOM coefficients.
\end{enumerate}

\paragraph*{Solution for the stock price}

The LOM for log stock prices is
\begin{align*}
s_{t}= & Z_{s0}+Z_{sk}k_{t}+Z_{sx}x_{t}+Z_{sh}h_{t}.
\end{align*}
The elasticity with respect to capital $k_{t}$ is
\begin{align*}
Z_{sk}= & \frac{-\eta_{1}-\sqrt{\eta_{1}^{2}-4\eta_{0}\eta_{2}}}{2\eta_{2}},
\end{align*}
where

\begin{align*}
\eta_{2}= & \frac{\delta\xi\left(\xi\overline{\text{IC}}\bar{R}+\psi\right)}{\psi},\\
\eta_{1}= & \frac{\psi\left((\alpha-1)\alpha\delta\xi\overline{\text{YK}}-2\delta\xi+1\right)-\bar{R}\left(\alpha\delta\xi\overline{\text{YC}}+\delta\xi(2\xi-1)\overline{\text{IC}}+\psi\right)}{\psi},\\
\eta_{0}= & \frac{\bar{R}\left(\alpha\delta\xi\overline{\text{YC}}+\delta(\xi-1)\xi\overline{\text{IC}}+\psi\right)-\psi(\delta\xi-1)\left((\alpha-1)\alpha\overline{\text{YK}}-1\right)}{\psi}.
\end{align*}
The elasticity with respect to productivity $x_{t}$ is

\begin{align*}
Z_{sx}= & \frac{\bar{R}\overline{\text{YC}}\left(\phi_{x}-1\right)-\alpha\psi\overline{\text{YK}}\phi_{x}}{\psi\left((\alpha-1)\alpha\delta\xi\overline{\text{YK}}+\delta\xi\left(Z_{\text{sk}}-1\right)+\phi_{x}\right)-\bar{R}\left(\alpha\delta\xi\overline{\text{YC}}-\xi\overline{\text{IC}}\left(\delta\xi Z_{\text{sk}}+\delta(-\xi)+\delta+\phi_{x}-1\right)+\psi\right)}.
\end{align*}
The elasticity with respect to the high-order process $h_{t}$ is
\begin{align*}
Z_{sh}= & \frac{\psi\bar{R}\left((\gamma-\frac{1}{\psi})G_{1}\left(-(\gamma-1)Z_{\text{vx}},\sigma_{x}\right)-(\gamma-1)G_{1}\left(\frac{\alpha\overline{\text{YK}}+Z_{\text{sx}}}{\bar{R}}+\frac{\xi\overline{\text{IC}}Z_{\text{sx}}}{\psi}-\frac{\overline{\text{YC}}}{\psi}-\gamma Z_{\text{vx}}+\frac{1}{\psi}Z_{\text{vx}},\sigma_{x}\right)\right)}{(\gamma-1)\left(\psi\left((\alpha-1)\alpha\delta\xi\overline{\text{YK}}+\delta\xi\left(Z_{\text{sk}}-1\right)+\phi_{h}\right)-\bar{R}\left(\alpha\delta\xi\overline{\text{YC}}-\xi\overline{\text{IC}}\left(\delta\xi Z_{\text{sk}}+\delta(-\xi)+\delta+\phi_{h}-1\right)+\psi\right)\right)},
\end{align*}
where

\begin{align*}
Z_{vk}= & \frac{\left(\overline{\beta}-1\right)\left(\alpha\overline{\text{YC}}+\overline{\text{IC}}\left(\xi\left(-Z_{\text{sk}}\right)+\xi-1\right)\right)}{\delta\lambda\xi Z_{\text{sk}}-\delta\lambda\xi+\lambda-1},\\
Z_{\text{vx}}= & \frac{\xi Z_{\text{sx}}\left(-\overline{\beta}\overline{\text{IC}}+\overline{\text{IC}}-\delta\lambda Z_{\text{vk}}\right)+\left(\overline{\beta}-1\right)\overline{\text{YC}}}{\lambda\phi_{x}-1},
\end{align*}
and
\begin{align*}
1-\bar{\beta}= & (1-\beta)\left(\bar{C}/\bar{V}\right)^{(1-\psi^{-1})},\\
\lambda=\beta & \left(\bar{W}/\bar{V}\right)^{(1-\psi^{-1})}.
\end{align*}
The intercept is
\begin{align*}
Z_{s0}= & \omega_{s,1}/\omega_{s,2},
\end{align*}
where

\begin{align}
\omega_{s,1}= & 2(\gamma-1)\psi\bar{R}\left(\alpha\psi\overline{\text{YK}}\left((\alpha-1)\delta^{2}+1\right)+\sigma_{h}^{2}\left(-Z_{\text{sh}}\right)\left(\xi\overline{\text{IC}}Z_{\text{sh}}+(1-\gamma\psi)Z_{\text{vh}}\right)+\psi\left(\delta^{2}Z_{\text{sk}}-\delta^{2}+2\right)\right)\nonumber \\
 & -(\gamma-1)\psi^{2}\sigma_{h}^{2}Z_{\text{sh}}^{2}\nonumber \\
 & \bar{R}^{2}(\gamma-1)\left(-2\alpha\delta^{2}\psi\overline{\text{YC}}+2\psi\left(\psi\left(\log\left(\frac{\beta\bar{V}}{\bar{M}\bar{W}}\right)-1\right)+(\gamma\psi-1)\log\left(\frac{\bar{W}}{\bar{V}}\right)+\psi\log\left(\bar{M}\bar{R}\right)\right)-(\psi-1)(\gamma\psi-1)\sigma_{h}^{2}Z_{\text{vh}}^{2}\right)\nonumber \\
 & +2\bar{R}^{2}(\gamma-1)\overline{\text{IC}}\left(\delta^{2}\psi\left(\xi Z_{\text{sk}}-\xi+1\right)+\xi(\gamma\psi-1)\sigma_{h}^{2}Z_{\text{sh}}Z_{\text{vh}}\right)\nonumber \\
 & -\bar{R}^{2}(\gamma-1)\xi^{2}\overline{\text{IC}}^{2}\sigma_{h}^{2}Z_{\text{sh}}^{2}-2\bar{R}^{2}\psi(\gamma\psi-1)G_{0}\left(-(\gamma-1)Z_{\text{vx}},\sigma_{x}\right),\\
\omega_{s,2}= & -2(\gamma-1)\psi\bar{R}\left(\psi\left((\alpha-1)\alpha\delta\xi\overline{\text{YK}}+\delta\xi Z_{\text{sk}}-\delta\xi+1\right)-\bar{R}\left(\alpha\delta\xi\overline{\text{YC}}-\delta\xi\overline{\text{IC}}\left(\xi Z_{\text{sk}}-\xi+1\right)+\psi\right)\right).
\end{align}

\paragraph{Solution for capital}

The LOM for log capital is
\begin{align*}
k_{t+1}= & Z_{k0}+Z_{kk}k_{t}+Z_{kx}x_{t}+Z_{kh}h_{t}.
\end{align*}
The elasticity with respect to capital $k_{t}$ is
\begin{align*}
Z_{kk}= & \frac{-\eta_{k,1}-\sqrt{\eta_{k,1}^{2}-4\eta_{k,0}\eta_{k,2}}}{2\eta_{k,2}},
\end{align*}
where

\begin{align*}
\eta_{k,2}= & \frac{\xi\overline{\text{IC}}\bar{R}+\psi}{\delta\psi\xi},\\
\eta_{k,1}= & \frac{\psi\left((\alpha-1)\alpha\delta\xi\overline{\text{YK}}-1\right)-\bar{R}\left(\alpha\delta\xi\overline{\text{YC}}-(\delta-2)\xi\overline{\text{IC}}+\psi\right)}{\delta\psi\xi},\\
\eta_{k,0}= & \frac{\bar{R}\left(\alpha\delta\xi\overline{\text{YC}}+\overline{\text{IC}}(\xi-\delta\xi)+\psi\right)}{\delta\psi\xi}.
\end{align*}
The elasticity with respect to productivity $x_{t}$ is

\begin{align*}
Z_{kx}= & \frac{\delta\xi\left(\alpha\psi\overline{\text{YK}}\phi_{x}-\bar{R}\overline{\text{YC}}\left(\phi_{x}-1\right)\right)}{\bar{R}\left(\alpha\delta\xi\overline{\text{YC}}-\xi\overline{\text{IC}}\left(\delta+Z_{\text{kk}}+\phi_{x}-2\right)+\psi\right)-\psi\left((\alpha-1)\alpha\delta\xi\overline{\text{YK}}+Z_{\text{kk}}+\phi_{x}-1\right)}.
\end{align*}
The elasticity with respect to the high-order process $h_{t}$ is
\begin{align*}
Z_{kh}= & -\frac{\delta\xi\bar{R}\left((\gamma\psi-1)G_{1}\left(-(\gamma-1)Z_{\text{vx}},\sigma_{x}\right)-(\gamma-1)\psi G_{1}\left(\frac{\psi\left(\alpha\delta\xi\overline{\text{YK}}+Z_{\text{kx}}\right)+\xi\bar{R}\left(-\delta\overline{\text{YC}}+\overline{\text{IC}}Z_{\text{kx}}+\delta(1-\gamma\psi)Z_{\text{vx}}\right)}{\delta\psi\xi\bar{R}},\sigma_{x}\right)\right)}{(\gamma-1)\left(\bar{R}\left(\alpha\delta\xi\overline{\text{YC}}-\xi\overline{\text{IC}}\left(\delta+\phi_{h}+Z_{\text{kk}}-2\right)+\psi\right)-\psi\left((\alpha-1)\alpha\delta\xi\overline{\text{YK}}+\phi_{h}+Z_{\text{kk}}-1\right)\right)}.
\end{align*}
The intercept is
\begin{align*}
Z_{k0}= & \omega_{k,1}/\omega_{k,2},
\end{align*}
where

\begin{align*}
\omega_{k,1}= & \xi\bar{R}^{2}\delta^{2}((\gamma-1)(\psi-1)\xi(\gamma\psi-1)\sigma_{h}^{2}Z_{\text{vh}}^{2}-2\psi\\
 & \left((\gamma-1)\left(\psi\left(\xi\log\left(\frac{\beta\bar{V}}{\bar{M}\bar{W}}\right)+\delta-\xi\right)+\xi(\gamma\psi-1)\log\left(\frac{\bar{W}}{\bar{V}}\right)+\psi\xi\log\left(\bar{M}\bar{R}\right)\right)+(\xi-\gamma\psi\xi)G_{0}\left(-(\gamma-1)Z_{\text{vx}},\sigma_{x}\right)\right)\\
 & -\xi\bar{R}^{2}2\delta(\gamma-1)\xi\overline{\text{IC}}(\gamma\psi-1)\sigma_{h}^{2}Z_{\text{kh}}Z_{\text{vh}}+\xi\bar{R}^{2}(\gamma-1)\xi\overline{\text{IC}}^{2}\sigma_{h}^{2}Z_{\text{kh}}^{2}\\
 & +2(\gamma-1)\psi\xi\bar{R}\left(-\alpha\delta^{2}\psi\xi\overline{\text{YK}}+\overline{\text{IC}}\sigma_{h}^{2}Z_{\text{kh}}^{2}+\delta\left(\delta\psi(\delta-2\xi)+(1-\gamma\psi)\sigma_{h}^{2}Z_{\text{kh}}Z_{\text{vh}}\right)\right)+(\gamma-1)\psi^{2}\sigma_{h}^{2}Z_{\text{kh}}^{2},\\
\omega_{k,2}= & -2\delta(\gamma-1)\psi\xi\bar{R}\left(\bar{R}\left(\alpha\delta\xi\overline{\text{YC}}-\xi\overline{\text{IC}}\left(\delta+Z_{\text{kk}}-1\right)+\psi\right)-\psi\left((\alpha-1)\alpha\delta\xi\overline{\text{YK}}+Z_{\text{kk}}\right)\right).
\end{align*}

\section{Proof of Theorem \ref{prop:main_theorem} \label{subsec:appendix_abstract}}

The essence of the theorem is that the method of undetermined coefficients allows us to solve for the elasticities with respect to first moment and higher moment variables in separate layers, while keeping track of the parameters involved in determining these elasticities. The irrelevance theorem comes down to proving that the elasticities with respect to first moment variables\textemdash and those with respect to lagged endogenous variables\textemdash do not depend on risk aversion and high moment parameters. The first layer contains the endogenous dynamics: the elasticities of endogenous variables with respect to endogenous (lagged) states. We show these endogenous dynamics are determined irrespective of risk aversion and the parameters that govern exogenous risk dynamics, extending the results of \citet{schmitt2004solving} to a setting with recursive preferences and time-varying risk. The second layer of the solution consists of the elasticities of endogenous variables with respect to first moment states. In this layer as well the elasticities are determined irrespective of both risk aversion and time-varying risk parameters. These first two layers provide us with our main irrelevance theorem. The final layer of the model concerns the elasticities of endogenous variables with respect to higher moment states, which are states that directly affect only the higher moments of the model. In this layer, the elasticities depend on all aspects of the model, including risk aversion and time-varying risk.
  
The proof proceeds in three steps. Section \ref{sec:mainproof_lemmas} provides some preliminary lemmas. Section \ref{sec:mainproof_proposition} uses them to characterize the main elasticity matrices. With that characterization in hand, Section \ref{sec:mainproof_theorem} proves the theorem.

\subsection{Preliminary Lemmas}\label{sec:mainproof_lemmas}

We begin by showing that the $\mathbb{H}_{t}$ operator preserves linearity due to the fact that the CGF of the shock terms (Equations (\ref{eq:main_aff_epx}) and (\ref{eq:main_aff_eph})) are approximately linear.
\begin{lem}
\label{prop:Hzx}Let $\boldsymbol{x_{t+1}}$ and $\boldsymbol{z_{t+1}}$
follow the processes in Equations (\ref{eq:main_x_process}) and (\ref{eq:main_aff_solution-1}),
respectively. For any $n>0,$ any $n\times n_{x}$ matrix $\nu_{x}$,
and any $n\times n_{z}$ matrix $\nu_{z}$, the entropy of the linear
combination $\nu_{z}\mathbf{z}_{t+1}+\nu_{x}\mathbf{x}_{t+1}$ is given
by
\begin{eqnarray}
	\label{eq:app_main_Hzx}
\mathbb{H}_{t}\left[\nu_{z}\mathbf{z}_{t+1}+\nu_{x}\mathbf{x}_{t+1}\right] & = & \hat{g}_{xh}(\nu_{z}Z_{x}+\nu_{x},\nu_{z}Z_{h};\Theta) + 
  \hat{G}_{xh}(\nu_{z}Z_{x}+\nu_{x},\nu_{z}Z_{h};\Theta)\mathbf{h}_{t},
\end{eqnarray}
where
\begin{align}
\hat{g}_{xh}(\nu_{z}Z_{x}+\nu_{x},\nu_{z}Z_{h};\Theta)\equiv & \tilde{g}_{x}(\nu_{z}Z_{x}+\nu_{x};\Theta_{h})+\tilde{g}_{h}(\nu_{z}Z_{h};\Theta),\label{eq:app_main_gxh0}\\
\hat{G}_{xh}(\nu_{z}Z_{x}+\nu_{x},\nu_{z}Z_{h};\theta_{x},\theta_{h})\equiv & \tilde{G}_{x}(\nu_{z}Z_{x}+\nu_{x};\theta_{x},\theta_{h})+\tilde{G}_{h}(\nu_{z}Z_{h};\theta_{h})\label{eq:app_main_Gxh}
\end{align}
and $\tilde{g}_{x}()$, $\tilde{g}_{h}()$, $\tilde{G}_{x}()$, and $\tilde{G}_{h}()$ as defined in Equations (\ref{eq:main_aff_epx}) and (\ref{eq:main_aff_eph}). \end{lem}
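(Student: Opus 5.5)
The plan is to substitute the conjectured laws of motion into the linear combination, reduce the entropy operator to cumulant generating functions of the primitive shocks, and then apply Assumption \ref{def:cond_affine_CGF}.

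\emph{Step 1: isolate the innovation.} Write $\mathbf{z}_{t+1}=z+Z_{z}\mathbf{z}_{t}+Z_{x}\mathbf{x}_{t+1}+Z_{h}\mathbf{h}_{t+1}$ using Equation (\ref{eq:main_aff_solution-1}) shifted one period. Then substitute Equations (\ref{eq:main_x_process}) and (\ref{eq:main_h_process}) for $\mathbf{x}_{t+1}$ and $\mathbf{h}_{t+1}$. This expresses $\nu_{z}\mathbf{z}_{t+1}+\nu_{x}\mathbf{x}_{t+1}$ as a time-$t$ measurable term plus
\begin{align*}
(\nu_{z}Z_{x}+\nu_{x})\epsilon_{x,t+1}+\nu_{z}Z_{h}\epsilon_{h,t+1}.
\end{align*}
The time-$t$ term collects $z$, $\mathbf{z}_{t}$, $\mathbf{x}_{t}$, $\mathbf{h}_{t}$ and the intercepts.

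By the definition (\ref{eqn:curvature}), $\mathbb{H}_{t}$ subtracts the conditional mean, so the measurable part cancels. Because both shocks have conditional mean zero, the conditional mean of the combination equals that measurable part. Hence
\begin{align*}
\mathbb{H}_{t}[\nu_{z}\mathbf{z}_{t+1}+\nu_{x}\mathbf{x}_{t+1}]=\boldsymbol{\log}\,\mathbb{E}_{t}\left[\textbf{exp}\left((\nu_{z}Z_{x}+\nu_{x})\epsilon_{x,t+1}+\nu_{z}Z_{h}\epsilon_{h,t+1}\right)\right].
\end{align*}

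\emph{Step 2: split the expectation.} This step uses the independence assumption (\ref{eq:independent_hx}), applied conditionally on time-$t$ information. The justification is that both pdfs condition only on $\mathbf{h}_{t}$ and parameters. Working row by row on the vectorized $\textbf{exp}$, the expectation of a product of functions of $\epsilon_{x,t+1}$ and $\epsilon_{h,t+1}$ factors, so the log becomes a sum. This gives $K_{\epsilon_x}(\nu_{z}Z_{x}+\nu_{x})+K_{\epsilon_h}(\nu_{z}Z_{h})$.

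\emph{Step 3: apply the affine CGF assumption.} Apply Equations (\ref{eq:main_aff_epx}) and (\ref{eq:main_aff_eph}) with $\nu_{x}$ replaced by $\nu_{z}Z_{x}+\nu_{x}$, which is $n\times n_{x}$, and $\nu_{h}$ replaced by $\nu_{z}Z_{h}$, which is $n\times n_{h}$. Both dimensions conform. Collecting the constant terms and the terms multiplying $\mathbf{h}_{t}$ yields $\hat{g}_{xh}$ and $\hat{G}_{xh}$ exactly as defined in (\ref{eq:app_main_gxh0})--(\ref{eq:app_main_Gxh}).

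The only delicate point is Step 2. The factorization must hold for each row of the vectorized operator, and it requires conditional rather than merely unconditional independence. I would state explicitly that (\ref{eq:independent_hx}) is read conditionally on the time-$t$ information set. This reading is consistent with the pdfs depending only on $\mathbf{h}_{t}$. Everything else is direct substitution.
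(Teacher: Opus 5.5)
Your proposal is correct and follows the paper's one-line proof: substitute the laws of motion so that only the innovation $(\nu_{z}Z_{x}+\nu_{x})\epsilon_{x,t+1}+\nu_{z}Z_{h}\epsilon_{h,t+1}$ survives the entropy operator, factor the expectation using independence of the two shocks, and apply Assumption \ref{def:cond_affine_CGF}. You also state two assumptions the paper's argument uses only implicitly when it equates entropy with the CGFs: the independence in Equation \eqref{eq:independent_hx} must hold conditionally on time-$t$ information, and $\epsilon_{h,t+1}$ must have zero conditional mean (the paper states this only for $\epsilon_{x,t+1}$, calling $\text{pdf}_{h}$ ``arbitrary'').
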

\begin{proof}
Apply the definition of $\mathbb{H}_{t}$
in Equation (\ref{eqn:curvature}), replacing the laws of motion for
$\boldsymbol{z_{t+1}}$, and $\boldsymbol{x_{t+1}}$, and using the
linear approximations of the CGFs of $\epsilon_{x,t+1}$ and $\epsilon_{h,t+1}$
in Assumption \ref{def:cond_affine_CGF} and the fact that the two
vectors of random variables are independent.
\end{proof}

Plugging in the conjectured solution (Equation \eqref{eq:main_aff_solution-1}) into Proposition \ref{prop:all_plugged_in-1}, and applying Lemma \ref{prop:Hzx}, leads to a more detailed partitioning of the equilibrium conditions.
\begin{lem}
\label{lem:concise_equation}
If Equation \eqref{eq:main_aff_solution-1} holds, then Equations (\ref{eq:main_aff_tech-1}) and (\ref{eq:main_Euler-1})
can be written as
\begin{eqnarray}
b^{1}(\theta_{0})+B_{z}^{1}(\theta_{0})\mathbf{z}_{t}+B_{zl}^{1}(\theta_{0})\mathbf{z}_{t-1}+B_{x}^{1}(\theta_{0})\mathbf{x}_{t} & = & \mathbb{E}_{t}\left[D_{z}^{1}(\theta_{0})\mathbf{z}_{t+1}+D_{x}^{1}(\theta_{0})\mathbf{x}_{t+1}\right]\nonumber \\
  & + 
  & g^{1}(Z_{x},Z_{h};\theta_{0},\theta_{h})
   +G_{h}^{1}(Z_{x},Z_{h};\theta_{0},\theta_{h})\mathbf{h}_{t},\ \ \ \ \ \ \ \ \label{eq:main_aff_tech-1-1}
\end{eqnarray}
and
\begin{align}
\boldsymbol{0}_{n_{m}}= & \mathbb{E}_{t}[D_{z}^{2^{*}}(\theta_{0})\mathbf{z}_{t+1}]+g^{2}(Z_{x},Z_{h};\Theta)+G_{h}^{2}(Z_{x},Z_{h};\Theta)\mathbf{h}_{t},\label{eq:main_Euler-1-1}
\end{align}
respectively, where $g^{1}(Z_{x},Z_{h};\theta_{0},\theta_{h})$ and $G_{h}^{1}(Z_{x},Z_{h};\theta_{0},\theta_{h})$ are the constant and slope terms obtained by applying Assumption \ref{def:cond_affine_CGF} to the entropy term in Equation \eqref{eq:main_aff_tech-1}, and $g^{2}(Z_{x},Z_{h};\Theta)$ and $G_{h}^{2}(Z_{x},Z_{h};\Theta)$ are the constant and slope terms obtained by applying Assumption \ref{def:cond_affine_CGF} to the entropy term in Equation \eqref{eq:main_Euler-1}.

Therefore, the equilibrium equations can be expressed concisely as
\begin{eqnarray}
b(\theta_{0})+B_{z}(\theta_{0})\mathbf{z}_{t}+B_{zl}(\theta_{0})\mathbf{z}_{t-1}+B_{x}(\theta_{0})\mathbf{x}_{t} & = & \mathbb{E}_{t}\left[D_{z}(\theta_{0})\mathbf{z}_{t+1}+D_{x}(\theta_{0})\mathbf{x}_{t+1}\right]\nonumber \\
  & + & g(Z_{x},Z_{h};\Theta)+G_{h}(Z_{x},Z_{h};\Theta)\mathbf{h}_{t},\ \ \ \label{eq:app_main_equilibrium-1}
\end{eqnarray}
where
\begin{align*}
  b(\theta_{0}) & \equiv\left(\begin{array}{c}
    b^{1}(\theta_{0})\\ \nonumber
    \boldsymbol{0}
  \end{array}\right),\quad B_{z}(\theta_{0})\equiv\left(\begin{array}{c}
B_{z}^{1}(\theta_{0}) , \\ \boldsymbol{0} \end{array}\right),  \quad B_{zl}(\theta_{0})  \equiv\left(\begin{array}{c} B_{zl}^{1}(\theta_{0})\\ \boldsymbol{0} \end{array}\right),\\ B_{x}(\theta_{0}) & \equiv\left(\begin{array}{c} B_{x}^{1}(\theta_{0})\\ \boldsymbol{0} \end{array}\right), \quad	D_{z}(\theta_{0})  \equiv\left(\begin{array}{c} D_{z}^{1}(\theta_{0})\\ D_{z}^{2^{*}}(\theta_{0}) \end{array}\right),\quad D_{x}(\theta_{0})\equiv\left(\begin{array}{c} D_{x}^{1}(\theta_{0})\\ \boldsymbol{0} \end{array}\right),\\ g(Z_{x},Z_{h};\Theta)\equiv & \left(\begin{array}{c} g^{1}(Z_{x},Z_{h};\theta_{0},\theta_{h})\\ g^{2}(Z_{x},Z_{h};\Theta) \end{array}\right)\text{, \ \ and \ \ }G_{h}(Z_{x},Z_{h};\Theta)\equiv\left(\begin{array}{c} G_{h}^{1}(Z_{x},Z_{h};\theta_{0},\theta_{h})\\ G_{h}^{2}(Z_{x},Z_{h};\Theta) \end{array}\right). \nonumber 
\end{align*}
\end{lem}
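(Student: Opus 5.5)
The plan is to substitute the conjectured solution \eqref{eq:main_aff_solution-1} into the two groups of equilibrium conditions, handling the entropy terms and the expectation terms separately.

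\textbf{Entropy terms.} Under \eqref{eq:main_aff_solution-1}, $\mathbf{z}_{t+1}=z+Z_{z}\mathbf{z}_{t}+Z_{x}\mathbf{x}_{t+1}+Z_{h}\mathbf{h}_{t+1}$. Substituting the laws of motion \eqref{eq:main_x_process} and \eqref{eq:main_h_process}, every argument of $\mathbb{H}_{t}$ in Proposition \ref{prop:all_plugged_in-1} is a linear combination $\nu_{z}\mathbf{z}_{t+1}+\nu_{x}\mathbf{x}_{t+1}$. Its conditional innovation is $(\nu_{z}Z_{x}+\nu_{x})\epsilon_{x,t+1}+\nu_{z}Z_{h}\epsilon_{h,t+1}$, because $\mathbf{z}_{t}$, $\mathbf{x}_{t}$, and $\mathbf{h}_{t}$ are known at $t$.

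\textbf{Technology block.} Take $\nu_{z}=D_{z}^{1}(\theta_{0})$ and $\nu_{x}=D_{x}^{1}(\theta_{0})$ and apply Lemma \ref{prop:Hzx}. The entropy term in \eqref{eq:main_aff_tech-1} becomes
\[
\tilde{g}_{x}(\nu_{z}Z_{x}+\nu_{x};\Theta)+\tilde{g}_{h}(\nu_{z}Z_{h};\Theta)+\big[\tilde{G}_{x}(\cdot)+\tilde{G}_{h}(\cdot)\big]\mathbf{h}_{t},
\]
which uses independence \eqref{eq:independent_hx} together with Assumption \ref{def:cond_affine_CGF}. Call the constant $g^{1}$ and the slope $G_{h}^{1}$. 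They depend on $Z_{x}$, $Z_{h}$, and the parameters that govern the shock distributions ($\theta_{0}$, $\theta_{h}$), but not on $\gamma$, since no $\gamma$ appears in $D^{1}$.

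\textbf{Euler block.} Apply Lemma \ref{prop:Hzx} twice to \eqref{eq:main_Euler-1}: once with $\nu_{z}=D_{z}^{2}(\theta_{0})$, $\nu_{x}=0$, and once with $\nu_{z}=(1-\gamma)A_{v}(\theta_{0})$, $\nu_{x}=0$. Multiply the second result by $-\frac{1/\psi-\gamma}{1-\gamma}\boldsymbol{1}_{n_{m}}$ and add. Collecting constants into $g^{2}(Z_{x},Z_{h};\Theta)$ and $\mathbf{h}_{t}$-loadings into $G_{h}^{2}(Z_{x},Z_{h};\Theta)$ gives \eqref{eq:main_Euler-1-1}. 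These terms may depend on $\gamma$ and $\theta_{h}$, which the notation allows.

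\textbf{Stacking.} Stack the $n_{z}-n_{m}$ technology rows on top of the $n_{m}$ Euler rows. The Euler rows have no left-hand-side terms, so $b$, $B_{z}$, $B_{zl}$, and $B_{x}$ get zero blocks there. The Euler rows' expectation term involves only $\mathbf{z}_{t+1}$, so $D_{x}$ gets a zero block and $D_{z}$ stacks $D_{z}^{1}$ with $D_{z}^{2^{*}}$. The remaining terms stack into $g$ and $G_{h}$, which yields \eqref{eq:app_main_equilibrium-1}.

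\textbf{Main obstacle.} The subtle step is bookkeeping rather than algebra. One point to verify is that $m_{t+1}\boldsymbol{1}_{n_{m}}$ is a linear function of $\mathbf{z}_{t+1}$ through $A_{m}$, even though $m_{t+1}$ contains $w_{t}$, a time-$t$ variable. Since $w_{t}$ is known at $t$, it contributes nothing to the innovation, so entropy is unaffected. A second point is that $\mathbb{H}_{t}$ must be applied to the vector $(1-\gamma)A_{v}\mathbf{z}_{t+1}$ using scalar CGF evaluations, consistent with Assumption \ref{def:cond_affine_CGF}.
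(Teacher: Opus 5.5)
Your proposal is correct and follows the paper's own proof. You apply Lemma \ref{prop:Hzx} to the technology-block entropy term with $\nu_z=D_z^1$, $\nu_x=D_x^1$, and twice to the Euler-block entropy terms, with $D_z^2$ and $(1-\gamma)A_v$. You then stack the rows, using zero blocks where the Euler rows have no left-hand-side or $\mathbf{x}_{t+1}$ terms. Your remark about $w_t$ inside $m_{t+1}$ is harmless but unnecessary: the paper treats $m_{t+1}$ as a component of $\mathbf{z}_{t+1}$ selected by $A_m$, so the conjectured law of motion applies to it directly.
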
 

\begin{proof}
The proof relies on showing that, given the conjectured solution in Equation (\ref{eq:main_aff_solution-1}), the $\mathbb{H}_{t}$ terms in Equations (\ref{eq:main_aff_tech-1}) and (\ref{eq:main_Euler-1}) can be expressed as linear combinations of the high order variables $\mathbf{h}_{t}$ by using Assumption \ref{def:cond_affine_CGF} and some algebraic manipulations. Once this is shown, the concise system of equilibrium conditions in Equation (\ref{eq:app_main_equilibrium-1}) follows.

We apply Lemma \ref{prop:Hzx} to the $\mathbb{H}_{t}$ terms in Equations (\ref{eq:main_aff_tech-1}) and (\ref{eq:main_Euler-1}) to express them, respectively, as
\begin{eqnarray}
\mathbb{H}_{t}\left[D_{z}^{1}(\theta_{0})\mathbf{z}_{t+1}+D_{x}^{1}(\theta_{0})\mathbf{x}_{t+1}\right] & = & g^{1}(Z_{x},Z_{h};\Theta)+G_{h}^{1}(Z_{x},Z_{h};\Theta)\mathbf{h}_{t}\nonumber \\
 & = & \hat{g}_{xh}(D_{z}^{1}(\theta_{0})Z_{x}+D_{x}^{1}(\theta_{0}),D_{z}^{1}(\theta_{0})Z_{h};\Theta_{x})\nonumber \\
 & + & \hat{G}_{xh}(D_{z}^{1}(\theta_{0})Z_{x}+D_{x}^{1}(\theta_{0}),D_{z}^{1}(\theta_{0})Z_{h};\theta_{x},\theta_{h})\mathbf{h}_{t},\ \ \label{eq:app_main_gm0}
\end{eqnarray}
and

\begin{eqnarray}
- & \left(\frac{\frac{1}{\psi}-\gamma}{1-\gamma}\right)\mathbb{H}_{t}\left[(1-\gamma)A_{v}\text{\textbf{(\ensuremath{\theta_{0}})}}\mathbf{z}_{t+1}\right]\boldsymbol{1}_{n_{m}}+\mathbb{H}_{t}\left[D_{z}^{2}(\theta_{0})\mathbf{z}_{t+1}\right]\nonumber \\
= & g^{2}(Z_{x},Z_{h};\Theta)+G_{h}^{2}(Z_{x},Z_{h};\Theta)\mathbf{h}_{t}\nonumber \\
= & -\left(\frac{\frac{1}{\psi}-\gamma}{1-\gamma}\right)\hat{g}_{xh}((1-\gamma)A_{v}\text{\textbf{(\ensuremath{\theta_{0}})}}Z_{x},(1-\gamma)A_{v}Z_{h};\Theta)\boldsymbol{1}_{n_{m}}\nonumber \\
 & +\hat{g}_{xh}(D_{z}^{2}(\theta_{0})Z_{x},D_{z}^{2}(\theta_{0})Z_{h};\Theta)\nonumber \\
 & -\left(\frac{\frac{1}{\psi}-\gamma}{1-\gamma}\right)\hat{G}_{xh}((1-\gamma)A_{v}\text{\textbf{(\ensuremath{\theta_{0}})}}Z_{x},(1-\gamma)A_{v}Z_{h};\Theta)\boldsymbol{1}_{n_{m}}\mathbf{h}_{t}\nonumber \\
 & +\hat{G}_{xh}(D_{z}^{2}(\theta_{0})Z_{x},D_{z}^{2}(\theta_{0})Z_{h};\Theta)\mathbf{h}_{t}.\label{eq:app_main_Gmh}
\end{eqnarray}

Replacing Equation (\ref{eq:app_main_gm0}) in Equation (\ref{eq:main_aff_tech-1}) and Equation (\ref{eq:app_main_Gmh}) in Equation (\ref{eq:main_Euler-1}), we obtain Equations (\ref{eq:main_aff_tech-1-1}) and (\ref{eq:main_Euler-1-1}), respectively. Grouping corresponding terms in these equations results in Equation (\ref{eq:app_main_equilibrium-1}).
\end{proof}

\subsection{Characterization of the Elasticities}\label{sec:mainproof_proposition}

\begin{prop}
\label{prop:layered_solution}Lemma \ref{lem:concise_equation} implies
that the coefficients and elasticities in the conjectured endogenous
process in Equation (\ref{eq:main_aff_solution-1}) that solve the
system of equations (\ref{eq:app_main_equilibrium-1}) are obtained
as follows:

1. The elasticities $Z_{z}$ satisfy the quadratic matrix equation
\begin{equation}
D_{z}(\theta_{0})Z_{z}^{2}-B_{zl}(\theta_{0})-B_{z}(\theta_{0})Z_{z}=\boldsymbol{0}_{n_{z}\times n_{z}},\label{eq:main_4cond_Zz}
\end{equation}
\qquad{}\ where $\boldsymbol{0}_{n_{z}\times n_{z}}$ is an $n_{z}\times n_{z}$ matrix of zeros.

2. The elasticities $Z_{x}$ satisfy the linear system
\begin{equation}
(B_{z}(\theta_{0})-D_{z}(\theta_{0})Z_{z})Z_{x}-D_{z}(\theta_{0})Z_{x}\Phi_{x}=D_{x}(\theta_{0})\Phi_{x}-B_{x}(\theta_{0}).\label{eq:main_4cond_Zx}
\end{equation}

3. The elasticities $Z_{h}$ satisfy the system
\begin{equation}
(B_{z}(\theta_{0})-D_{z}(\theta_{0})Z_{z})Z_{h}=D_{z}(\theta_{0})Z_{h}\Phi_{h}+G_{h}(Z_{x},Z_{h};\Theta).\label{eq:main_4cond_Zh}
\end{equation}

4. The coefficients $z_{0}$ satisfy the linear system
\begin{equation}
(B_{z}(\theta_{0})-D_{z}(\theta_{0})Z_{z}-D_{z}(\theta_{0}))z+b_{0}=(D_{z}(\theta_{0})Z_{x}+D_{x}(\theta_{0}))\mu_{x}+g_{0}(Z_{x},Z_{h};\Theta).\label{eq:main_4cond_z0}
\end{equation}
\end{prop}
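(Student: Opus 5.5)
The plan is the method of undetermined coefficients applied to the concise system in Lemma \ref{lem:concise_equation}. Substitute the conjectured law of motion \eqref{eq:main_aff_solution-1} for $\mathbf{z}_{t}$ on the left of \eqref{eq:app_main_equilibrium-1}, and the one-period-ahead version for $\mathbf{z}_{t+1}$ on the right. This turns both sides into affine functions of the current states $(\mathbf{z}_{t-1},\mathbf{x}_{t},\mathbf{h}_{t})$. Since the equilibrium conditions must hold for every realization of these states, the constant term and the coefficient on each state must match separately. Each of the four items of the proposition is one of these matchings.

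\emph{Computing the conditional expectation.} First use \eqref{eq:main_aff_solution-1} to write
\[
\mathbb{E}_{t}[\mathbf{z}_{t+1}]=z+Z_{z}\mathbf{z}_{t}+Z_{x}\mathbb{E}_{t}[\mathbf{x}_{t+1}]+Z_{h}\mathbb{E}_{t}[\mathbf{h}_{t+1}].
\]
By \eqref{eq:main_x_process} and \eqref{eq:main_h_process}, with zero-mean shocks, $\mathbb{E}_{t}[\mathbf{x}_{t+1}]=\mu_{x}+\Phi_{x}\mathbf{x}_{t}$ and $\mathbb{E}_{t}[\mathbf{h}_{t+1}]=\mu_{h}+\Phi_{h}\mathbf{h}_{t}$. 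Then substitute $\mathbf{z}_{t}=z+Z_{z}\mathbf{z}_{t-1}+Z_{x}\mathbf{x}_{t}+Z_{h}\mathbf{h}_{t}$ again, so that $Z_{z}\mathbf{z}_{t}$ contributes $Z_{z}^{2}\mathbf{z}_{t-1}$, $Z_{z}Z_{x}\mathbf{x}_{t}$, $Z_{z}Z_{h}\mathbf{h}_{t}$ and $Z_{z}z$. The left-hand side is treated the same way: $B_{z}\mathbf{z}_{t}$ produces $B_{z}Z_{z}\mathbf{z}_{t-1}$, $B_{z}Z_{x}\mathbf{x}_{t}$, $B_{z}Z_{h}\mathbf{h}_{t}$ and $B_{z}z$. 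The entropy pieces are already in the form $g+G_{h}\mathbf{h}_{t}$ by Lemma \ref{lem:concise_equation}, which itself rests on Lemma \ref{prop:Hzx} and Assumption \ref{def:cond_affine_CGF}.

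\emph{Matching coefficients.}
\begin{itemize}
\item The coefficient on $\mathbf{z}_{t-1}$ gives $B_{zl}+B_{z}Z_{z}=D_{z}Z_{z}^{2}$, which is \eqref{eq:main_4cond_Zz}.
\item The coefficient on $\mathbf{x}_{t}$ gives $B_{x}+B_{z}Z_{x}=D_{z}Z_{z}Z_{x}+D_{z}Z_{x}\Phi_{x}+D_{x}\Phi_{x}$. Rearranged, this is \eqref{eq:main_4cond_Zx}.
\item The coefficient on $\mathbf{h}_{t}$ gives $B_{z}Z_{h}=D_{z}Z_{z}Z_{h}+D_{z}Z_{h}\Phi_{h}+G_{h}$, which is \eqref{eq:main_4cond_Zh}. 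This is the only place besides the intercept where the entropy terms appear; $\mathbf{x}_{t+1}$ enters $D_{x}\mathbf{x}_{t+1}$ only through its mean and through the entropy.
\item The constant terms give item 4. These collect $D_{z}z$, $D_{z}Z_{z}z$, $D_{z}Z_{x}\mu_{x}$, $D_{x}\mu_{x}$ and $g$. The term $D_{z}Z_{h}\mu_{h}$ should be folded into $g_{0}$ or the intercept, following the paper's notation $b_{0},g_{0}$.
\end{itemize}

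\emph{Main obstacle.} The step needing care is verifying that no $\mathbf{x}_{t}$ or $\mathbf{z}_{t-1}$ dependence leaks into the entropy terms. Lemma \ref{prop:Hzx} gives this: entropy depends on $Z_{x},Z_{h}$ only through the shock loadings, and it is affine in $\mathbf{h}_{t}$ alone. This holds because $\mathbb{H}_{t}$ removes conditional means, and the CGFs in Assumption \ref{def:cond_affine_CGF} depend only on $\mathbf{h}_{t}$. With that established, the $\mathbf{z}_{t-1}$ and $\mathbf{x}_{t}$ equations contain no $g$ or $G_{h}$ terms. A secondary bookkeeping point concerns the Euler block. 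There the zero rows of $B$ and $D_{x}$ must be consistent with $D_{z}^{2*}$ stacked into $D_{z}$, and the $\mathbf{x}_{t+1}$ terms in the Euler equations are absorbed into $\mathbf{z}_{t+1}$ via the selection matrices \eqref{eqn:EndoVarSelection}. I will check that this stacking reproduces exactly the matrices defined in Lemma \ref{lem:concise_equation}.
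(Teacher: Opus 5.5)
Your proposal is correct and follows the paper's proof essentially step for step: substitute the conjectured law of motion and the exogenous processes into the concise system of Lemma \ref{lem:concise_equation}, expand the conditional expectation, and set the coefficients on $\mathbf{z}_{t-1}$, $\mathbf{x}_{t}$, $\mathbf{h}_{t}$ and the constant to zero. Your remark about the $D_{z}Z_{h}\mu_{h}$ term is a fair catch: the paper's expansion of $\mathbb{E}_{t}[D_{z}\mathbf{z}_{t+1}+D_{x}\mathbf{x}_{t+1}]$ drops it, so item 4 holds only with that term absorbed into the intercept (or with $\mu_{h}=0$), which leaves $Z_{z}$ and $Z_{x}$ unaffected.
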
 

\begin{proof}
Proving the proposition comes down to plugging in the conjectured solution (\ref{eq:main_aff_solution-1}) and the exogenous process for $\mathbf{x}_{t}$ in Equation (\ref{eq:main_x_process}) into Equation (\ref{eq:app_main_equilibrium-1}) and applying the method of undetermined coefficients in separate layers for the terms in front of $\text{\textbf{z}}_{t-1}$, $\text{\textbf{x}}_{t}$, $\text{\textbf{h}}_{t},$ and the constants, respectively.


We use Equations (\ref{eq:main_x_process}), (\ref{eq:main_h_process}), and (\ref{eq:main_aff_solution-1}) to express Equation (\ref{eq:app_main_equilibrium-1}) in terms of $\text{\textbf{x}}_{t}$, $\text{\textbf{h}}_{t}$, and $\text{\textbf{z}}_{t-1}$. The left-hand side of the equation becomes
\begin{eqnarray}
b(\theta_{0})+B_{z}(\theta_{0})\mathbf{z}_{t}+B_{zl}(\theta_{0})\mathbf{z}_{t-1}+B_{x}(\theta_{0})\mathbf{x}_{t} & = & b+B_{z}(\theta_{0})z+\left(B_{z}(\theta_{0})Z_{z}+B_{zl}(\theta_{0})\right)\mathbf{z}_{t-1}\nonumber \\
 &  & +\left(B_{z}(\theta_{0})Z_{x}+B_{x}(\theta_{0})\right)\mathbf{x}_{t}+B_{z}(\theta_{0})Z_{h}\mathbf{h}_{t}.\quad\quad\quad\quad\label{eq:app_main_eqm_lhs}
\end{eqnarray}
Similarly, the first term on the right-hand side of Equation (\ref{eq:app_main_equilibrium-1}) becomes
\begin{eqnarray}
\mathbb{E}_{t}\left[D_{z}(\theta_{0})\mathbf{z}_{t+1}+D_{x}(\theta_{0})\mathbf{x}_{t+1}\right] & = & D_{z}(\theta_{0})\left(\boldsymbol{1}+Z_{z}\right)z+(D_{z}(\theta_{0})Z_{x}+D_{x}(\theta_{0}))\mu_{x}+D_{z}(\theta_{0})Z_{z}^{2}\mathbf{z}_{t-1}\nonumber \\
 & + & \left[D_{z}(\theta_{0})Z_{z}Z_{x}+(D_{z}(\theta_{0})Z_{x}+D_{x}(\theta_{0}))\Phi_{x}\right]\mathbf{x}_{t}\nonumber \\
 &  & +\left[D_{z}(\theta_{0})Z_{z}Z_{h}+D_{z}(\theta_{0})Z_{h}\Phi_{h}\right]\mathbf{h}_{t}.\label{eq:app_main_eqm_rhs1}
\end{eqnarray}
Replacing Equations (\ref{eq:app_main_eqm_lhs}) and (\ref{eq:app_main_eqm_rhs1}) into (\ref{eq:app_main_equilibrium-1}) results in
\begin{align}
\boldsymbol{0}_{n_{z}\times1}= & \left[D_{z}(\theta_{0})Z_{z}^{2}-B_{zl}(\theta_{0})-B_{z}(\theta_{0})Z_{z}\right]\mathbf{z}_{t-1}\nonumber \\
 & +\left[\left[-\hat{B}_{z}(\theta_{0})+D_{z}(\theta_{0})Z_{z}\right]Z_{x}+D_{z}(\theta_{0})\Phi_{x}Z_{x}-B_{x}(\theta_{0})+D_{x}(\theta_{0})\Phi_{x})\right]\text{\textbf{x}}_{t}\nonumber \\
 & +\left[-B_{z}(\theta_{0})Z_{h}+D_{z}(\theta_{0})Z_{z}Z_{h}+D_{z}(\theta_{0})Z_{h}\Phi_{h}+G_{h}(Z_{x},Z_{h};\Theta)\right]\mathbf{h}_{t}\nonumber \\
 & -b(\theta_{0})-B_{z}(\theta_{0})z+D_{z}(\theta_{0})\left(\boldsymbol{1}+Z_{z}\right)z+(D_{z}(\theta_{0})Z_{x}+D_{x}(\theta_{0}))\mu_{x}\nonumber \\
 & +g(Z_{x},Z_{h};\Theta).\label{eq:all_plugged_in}
\end{align}

Using the method of undetermined coefficients, Equation (\ref{eq:all_plugged_in}) tells us that all terms in front of the state variables and the constants should equal zero to satisfy the equilibrium conditions. In turn, the method allows us to find the coefficients $z$, $Z_{z}$, $Z_{x}$ and $Z_{h}$ in the solution (\ref{eq:main_aff_solution-1}) in separate layers. 

First, the coefficients loading on the lagged endogenous variables $\mathbf{z}_{t-1}$ satisfy:
\begin{equation}
D_{z}(\theta_{0})Z_{z}^{2}-B_{zl}(\theta_{0})-B_{z}(\theta_{0})Z_{z}=\boldsymbol{0}_{n_{z}\times n_{z}}.\label{eq:app_main_4cond_Zz}
\end{equation}
Solving this quadratic matrix equation provides the $n_{z}\times n_{z}$ coefficients $Z_{z}$. If the number of predetermined endogenous variables is the same as the total number of endogenous variables $n_{z}$, the solution can be found using the methods described in \citet{mccallum1983non} or \citet{uhlig1999toolkit}. If the number of predetermined variables is less than $n_{z}$, the solution of the quadratic matrix equation can be found numerically, under the restriction that, if all the elements in a column of $B_{zl}$ are zero, all the elements in the corresponding column in $Z_{z}$ are zero too. Using the minimum state variable criterion, $Z_{z}=\boldsymbol{0}$ if $B_{zl}\equiv\boldsymbol{0}$.

Second, the coefficients on $\mathbf{x}_{t}$ in Equation (\ref{eq:all_plugged_in})
satisfy the linear system
\begin{equation}
(B_{z}(\theta_{0})-D_{z}(\theta_{0})Z_{z})Z_{x}-D_{z}(\theta_{0})Z_{x}\Phi_{x}=D_{x}(\theta_{0})\Phi_{x}-B_{x},\label{eq:app_main_4cond_Zx}
\end{equation}
which is a Sylvester equation that can be written as
\begin{eqnarray*}
vec(Z_{x})=\left(\mathbb{I}_{n_{x}}\otimes(B_{z}(\theta_{0})-D_{z}(\theta_{0})Z_{z})-\Phi_{x}^{\top}\otimes D_{z}\right)^{-1}\text{vec}\left(D_{x}\Phi_{x}-B_{x}(\theta_{0})\right),
\end{eqnarray*}
where the operator vec($W$) denotes the vectorization of matrix $W$, $\mathbb{I}_{n}$ is the $n\times n$ identity matrix, $\otimes$ denotes the Kronecker product. The Sylvester equation leads to the $n_{z}\times n_{s}$ coefficients $Z_{x}$.

Third, the coefficients on $\mathbf{h}_{t}$ in equation (\ref{eq:main_aff_solution-1}) satisfy \[ (B_{z}-D_{z}Z_{z})Z_{h}=D_{z}Z_{h}\Phi_{h}+G_{h}(Z_{x},Z_{h};\Theta), \] which is, in general, a nonlinear equation on $Z_{h}$. Note that for the particular case of a linear $G_{h}(Z_{x},Z_{h};\Theta)$ on $Z_{h}$, the equation becomes a Sylvester equation that can be written as
\begin{eqnarray*}
\text{vec}(Z_{h})=\left(\mathbb{I}_{n_{h}}\otimes(B_{z}-D_{z}Z_{z})-\Phi_{h}^{\top}\otimes D_{z}\right)^{-1}\text{vec}\left(G_{h}\right).
\end{eqnarray*}
Using the minimum state variable criterion, $Z_{h}=\boldsymbol{0}$ if $\Phi_{h}=B_{h}=G_{h}\equiv\boldsymbol{0}$.

Finally, the constant coefficients in equation (\ref{eq:main_aff_solution-1}) imply the system of $n_{z}$ linear equations \[ (B_{z}-D_{z}Z_{z}-D_{z})z+b=(D_{z}Z_{x}+D_{x})\mu_{x}+g(Z_{x},Z_{h};\Theta). \]

\end{proof}

\subsection{Proof of Theorem \ref{prop:main_theorem}}\label{sec:mainproof_theorem}

\begin{proof}
  Given Proposition \ref{prop:layered_solution}, the proof easily follows from noticing that the terms in front of
  $\text{\textbf{z}}_{t-1}$ in Equation (\ref{eq:main_4cond_Zz}) that
  determine $Z_{z}$ are functions of $\theta_{0}$ only. Similarly, given $Z_{z}$, Equation (\ref{eq:main_4cond_Zx})
  tells us that the terms in front of $\text{\textbf{x}}_{t}$ determine
  $Z_{x}$ entirely with functions of $\theta_{0}$. 
\end{proof}

\section{Barro-King with Recursive Preferences \label{subsec:app_barro_king}}

We first show that the labor-leisure condition under Epstein-Zin preferences is the same as in time-separable preferences. Consider Epstein-Zin preferences
\begin{align*}
V_{t}= & \left\{ (1-\beta)u(C_{t},1-L_{t})^{1-1/\psi}+\beta P_{t}^{1-1/\psi}\right\} ,
\end{align*}
where $L_{t}$ is labor. The first order condition for consumption implies
\begin{align}
\lambda= & \frac{1}{(1-\frac{1}{\psi})}\left(V_{t}^{1-\frac{1}{\psi}}\right)^{\frac{1}{1-\frac{1}{\psi}}-1}(1-\beta)(1-\frac{1}{\psi})u(C_{t},1-L_{t})^{-\frac{1}{\psi}}\frac{\partial u}{\partial C_{t}},\label{eq:app_bk_EZ1}
\end{align}
where $\lambda$ be the Lagrange multiplier on the household's budget constraint. The first order condition for labor is
\begin{align}
\lambda W_{t}= & \frac{1}{(1-\frac{1}{\psi})}\left(V_{t}^{1-\frac{1}{\psi}}\right)^{\frac{1}{1-\frac{1}{\psi}}-1}(1-\beta)(1-\frac{1}{\psi})u(C_{t},1-L_{t})^{-\frac{1}{\psi}}\frac{\partial u}{\partial L_{t}},\label{eq:app_bk_EZ2}
\end{align}
where $W_{t}$ is the wage. Substituting (\ref{eq:app_bk_EZ1}) into (\ref{eq:app_bk_EZ2}), we have the usual labor-leisure condition
\begin{align*}
W_{t}= & \frac{\frac{\partial u}{\partial L_{t}}\left[\frac{V_{t}}{u(C_{t},1-L_{t})}\right]^{\frac{1}{\psi}}}{\frac{\partial u}{\partial C_{t}}\left[\frac{V_{t}}{u(C_{t},1-L_{t})}\right]^{\frac{1}{\psi}}}=\frac{u_{2}(C_{t},1-L_{t})}{u_{1}(C_{t},1-L_{t})}.
\end{align*}
In what follows, we drop the time subscripts since we are working with intratemporal concepts.

We now provide a formal proof of Barro-King impossibility. To provide this proof, one needs a few additional assumptions. Specifically, assume
\begin{itemize}
\item $s$ = a vector of state variables,
\item $x$ = productivity, one of the states in $s$,
\item $k$ = capital, also one of the states in $s$,
\item $F(x,k,L)$ = production.
\end{itemize}
Also assume that in equilibrium a resource constraint holds in every state
\begin{align}
C(s)+I(s) & =F(x,k,L(s)),\label{eq:app_bk_assump1}\\
\nonumber
\end{align}
and labor is paid its marginal product in every state
\begin{align*}
MRS(C(s),L(s))= & \frac{u_{2}(C(s),1-L(s))}{u_{1}(C(s),1-L(s))}=F_{3}(x,k,L(s)).
\end{align*}
In addition, the following restrictions on the functional forms hold:
\begin{align}
F_{33}<0 & \quad\text{(diminishing marginal product of labor)},\\
u_{12}u_{1}>u_{2}u_{11} & \quad\text{(consumption is a normal good)},\\
u_{12}u_{2}>u_{1}u_{22} & \quad\text{(leisure is a normal good)}.\label{eq:app_bk_assump_end}
\end{align}
These last two inequalities are equivalent to requiring the marginal rate of substitution $u_{2}/u_{1}$ to be decreasing in consumption and increasing in labor, so they impose the economically intuitive condition that both goods are normal.

Now consider a state $h$ that is neither $x$ nor $k$, and suppose without loss of generality that
\begin{align}
\frac{\partial L(s)}{\partial h}\ge0.\label{eq:app_bk_dldz}
\end{align}
In order for $h$ to generate comovement, we need also that
\begin{align}
\frac{\partial C(s)}{\partial h}\ge0,\quad\text{and}\quad\frac{\partial I(s)}{\partial h}\ge0,\label{eq:app_bk_dcdz}
\end{align}
and that at least one derivative is positive.

One can show that Equations (\ref{eq:app_bk_dldz}) and (\ref{eq:app_bk_dcdz}) cannot both be true. Moreover, one of those derivatives has to be negative.

To prove this, first consider the case that
\begin{align}
\frac{\partial L(s)}{\partial h}=0.
\end{align}
This case looks a bit trivial, but this is true for many of production based asset pricing models since most models don't consider leisure preferences. In this case, $\frac{\partial}{\partial h}F(x,k,L(s))=0$ (changing $h$ doesn't change production, since it doesn't affect labor), so the differentiating the resource constraint yields
\begin{align}
\frac{\partial C(s)}{\partial h}+\frac{\partial I(s)}{\partial h}=0,
\end{align}
and so either these derivatives have opposite signs or they're both zero.

Now consider the case with leisure preferences. Without loss of generality, we assume
\begin{align}
\frac{\partial L(s)}{\partial h}>0.
\end{align}
Now we differentiate the labor leisure condition with respect to $h$. This is a little complicated, so let's take it in steps. First, let's look at the big picture:
\begin{align}
MRS_{1}\frac{\partial C(s)}{\partial h}+MRS_{2}\frac{\partial L(s)}{\partial h}=F_{33}(x,k,L(s))\frac{\partial L(s)}{\partial h}<0.
\end{align}
Note that if $MRS_{1}>0$ and $MRS_{2}>0$, then $\frac{\partial C(s)}{\partial h}<0$, and we have proved impossibility.

First let's show $MRS_{1}>0$:
\begin{align}
MRS_{1} & =\frac{\partial}{\partial C(s)}\left[\frac{u_{2}(C(s),1-L(s))}{u_{1}(C(s),1-L(s))}\right]\\
 & =\frac{u_{12}u_{1}-u_{2}u_{11}}{(u_{1})^{2}}>0,
\end{align}
by the normal-good condition $u_{12}u_{1}>u_{2}u_{11}$. Similarly, one can show $MRS_{2}>0$:
\begin{align}
MRS_{2} & =\frac{\partial}{\partial L(s)}\left[\frac{u_{2}(C(s),1-L(s))}{u_{1}(C(s),1-L(s))}\right]\\
 & =\frac{u_{12}u_{2}-u_{1}u_{22}}{(u_{1})^{2}}>0,
\end{align}
using $u_{12}u_{2}>u_{1}u_{22}$. 
Thus, under the neoclassical structure (\ref{eq:app_bk_assump1})-(\ref{eq:app_bk_assump_end}) only productivity shocks can drive positive comovement in consumption, investment, and labor. This result restates the Barro and King (1984) comovement restriction using the weaker and economically interpretable normal-good conditions.

\section{The Need for 71 Gridpoints in Productivity \label{subsec:Nx_demo}}

\begin{figure}[tph]
\begin{centering}
\includegraphics[scale=0.5]{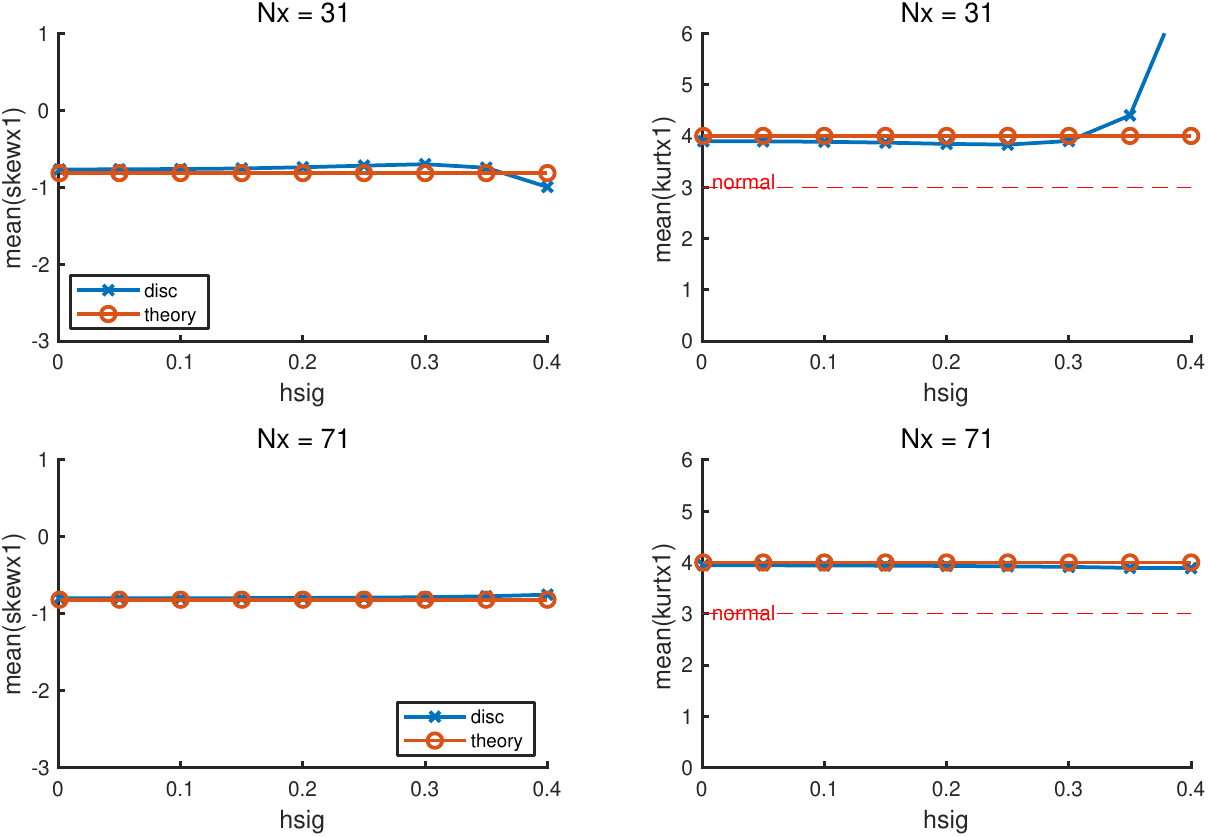}
\par\end{centering}
\textbf{\label{fig:Nx_demo}{\color{ChadBlue}The Accuracy of the Discretization of a Process with Time-Varying Fat Tails.}}
The top panels use 31 gridpoints. The discretization matches the theoretical
moment for most values of the conditional vol of vol, but for high
values the average kurtosis deviates significantly. The bottom panels
use 71 gridpoints and lead to a good approximation for all values
of the conditional vol of vol.
\end{figure}

\newpage{}

\newpage{}

\bibliographystyle{elsarticle-harv}
\bibliography{irrbib}

@incollection{Bernanke1999,
	author = {Bernanke, Ben S. and Gertler, Mark and Gilchrist, Simon},
	title = {The financial accelerator in a quantitative business cycle framework},
	editor = {Taylor, J. B. and Woodford, M.},
	booktitle = {Handbook of Macroeconomics},
	series = {Handbook of Macroeconomics},
	edition = {1},
	volume = {1},
	chapter = {21},
	pages = {1341-1393},
	year = {1999},
	publisher = {Elsevier}
}

@techreport{GourioNgo2020ZLB,
	author = {Gourio, Fran\c{c}ois and Ngo, Phuong},
	year = {2020},
	title = {Risk Premia at the ZLB: A Macroeconomic Interpretation},
	number = {WP 2020-01},
	series = {Working Paper Series},
	institution = {Federal Reserve Bank of Chicago},
	doi = {10.21033/wp-2020-01},
	month = {January}
}

@techreport{Zeke2024RiskTaking,
	author = {Zeke, David and Joel David},
	title = {Risk-Taking, Capital Allocation and Monetary Policy},
	institution = {Federal Reserve Bank of Chicago},
	type = {Working Paper},
	number = {WP-2021-01},
	year = {2024},
	month = {January},
	note = {Available at SSRN: https://ssrn.com/abstract=3856385}
}

@techreport{Basu2024Risky,
	author = {Basu, Susanto and Candian, Giacomo and Chahrour, Ryan and Valchev, Rosen},
	title = {Risky Business Cycles},
	institution = {National Bureau of Economic Research},
	series = {Working Paper Series},
	number = {28693},
	year = {2024},
	month = {Sep},
	url = {http://www.nber.org/papers/w28693}
}

@article{DiTellaHall2022,
	author={Di Tella, Sebastian and Hall, Robert E.},
	title={Risk premium shocks can create inefficient recessions},
	journal={The Review of Economic Studies},
	volume={89},
	number={3},
	pages={1335-1369},
	year={2022},
	publisher={Review of Economic Studies Ltd}
}

@article{cochrane2021rethinking,
  title={Rethinking production under uncertainty},
  author={Cochrane, John H},
  journal={The Review of Asset Pricing Studies},
  volume={11},
  number={1},
  pages={1--59},
  year={2021},
  publisher={Oxford University Press}
}

@article{weil1990nonexpected,
  title={Nonexpected utility in macroeconomics},
  author={Weil, Philippe},
  journal={The Quarterly Journal of Economics},
  volume={105},
  number={1},
  pages={29--42},
  year={1990},
  publisher={MIT Press}
}

@article{hall1988intertemporal,
  title={Intertemporal substitution in consumption},
  author={Hall, Robert E},
  journal={Journal of political economy},
  volume={96},
  number={2},
  pages={339--357},
  year={1988},
  publisher={The University of Chicago Press}
}

@article{pohl2018higher,
  title={Higher Order Effects in Asset Pricing Models with Long-Run Risks},
  author={Pohl, Walter and Schmedders, Karl and Wilms, Ole},
  journal={The Journal of Finance},
  volume={73},
  number={3},
  pages={1061--1111},
  year={2018},
  publisher={Wiley Online Library}
}

@article{tauchen1986finite,
  title={Finite state markov-chain approximations to univariate and vector autoregressions},
  author={Tauchen, George},
  journal={Economics letters},
  volume={20},
  number={2},
  pages={177--181},
  year={1986},
  publisher={Elsevier}
}

@article{christiano1990solving,
  title={Solving the stochastic growth model by linear-quadratic approximation and by value-function iteration},
  author={Christiano, Lawrence J},
  journal={Journal of Business \& Economic Statistics},
  volume={8},
  number={1},
  pages={23--26},
  year={1990},
  publisher={Taylor \& Francis}
}

@article{backus2004exotic,
  title={Exotic preferences for macroeconomists},
  author={Backus, David K and Routledge, Bryan R and Zin, Stanley E},
  journal={NBER Macroeconomics Annual},
  volume={19},
  pages={319--390},
  year={2004},
  publisher={MIT Press}
}

@article{lopez2018risk,
  title={Risk-Adjusted Linearizations of Dynamic Equilibrium Models},
  author={Lopez, Pierlauro and Lopez-Salido, David and Vazquez-Grande, Francisco},
  year={2018}
}

@article{petrosky2018endogenous,
Author = {Petrosky-Nadeau, Nicolas and Zhang, Lu and Kuehn, Lars-Alexander},
Title = {Endogenous Disasters},
Journal = {American Economic Review},
Volume = {108},
Number = {8},
Year = {2018},
Month = {August},
Pages = {2212-45},
DOI = {10.1257/aer.20130025},
URL = {http://www.aeaweb.org/articles?id=10.1257/aer.20130025}}

@article{hansen2001robust,
  title={Robust control and model uncertainty},
  author={Hansen, LarsPeter and Sargent, Thomas J},
  journal={American Economic Review},
  volume={91},
  number={2},
  pages={60--66},
  year={2001}
}

@article{decker2016market,
  title={Market exposure and endogenous firm volatility over the business cycle},
  author={Decker, Ryan A and D'Erasmo, Pablo N and Moscoso Boedo, Hernan},
  journal={American Economic Journal: Macroeconomics},
  volume={8},
  number={1},
  pages={148--98},
  year={2016}
}

@article{mccallum1983non,
  title={On non-uniqueness in rational expectations models: An attempt at perspective},
  author={McCallum, Bennett T},
  journal={Journal of monetary Economics},
  volume={11},
  number={2},
  pages={139--168},
  year={1983},
  publisher={Elsevier}
}

@article{albuquerque2016valuation,
  title={Valuation risk and asset pricing},
  author={Albuquerque, Rui and Eichenbaum, Martin and Luo, Victor Xi and Rebelo, Sergio},
  journal={The Journal of Finance},
  volume={71},
  number={6},
  pages={2861--2904},
  year={2016},
  publisher={Wiley Online Library}
}

@article{chen2017external,
  title={External Habit in a Production Economy: A Model of Asset Prices and Consumption Volatility Risk},
  author={Chen, Andrew Y},
  journal={The Review of Financial Studies},
  year={2017}
}

@article{colacito2014bkk,
  title={Bkk the ez way},
  author={Colacito, Riccardo and Croce, Mariano and Ho, Steven and Howard, Philip},
  journal={Wor\&ing Paper, University of North Carolina},
  year={2014}
}

@article{basu2017uncertainty,
  title={Uncertainty shocks in a model of effective demand},
  author={Basu, Susanto and Bundick, Brent},
  journal={Econometrica},
  volume={85},
  number={3},
  pages={937--958},
  year={2017},
  publisher={Wiley Online Library}
}

@article{cochrane2017macro,
  title={Macro-finance},
  author={Cochrane, John H},
  journal={Review of Finance},
  volume={21},
  number={3},
  pages={945--985},
  year={2017},
  publisher={Oxford University Press}
}

@article{backus2014sources,
  title={Sources of entropy in representative agent models},
  author={Backus, David and Chernov, Mikhail and Zin, Stanley},
  journal={The Journal of Finance},
  volume={69},
  number={1},
  pages={51--99},
  year={2014},
  publisher={Wiley Online Library}
}

@article{croce2014investor,
  title={Investor information, long-run risk, and the term structure of equity},
  author={Croce, Mariano M and Lettau, Martin and Ludvigson, Sydney C},
  journal={The Review of Financial Studies},
  volume={28},
  number={3},
  pages={706--742},
  year={2014},
  publisher={Oxford University Press}
}

@article{klibanoff2005smooth,
  title={A smooth model of decision making under ambiguity},
  author={Klibanoff, Peter and Marinacci, Massimo and Mukerji, Sujoy},
  journal={Econometrica},
  volume={73},
  number={6},
  pages={1849--1892},
  year={2005},
  publisher={Wiley Online Library}
}

@article{ilut2014ambiguous,
  title={Ambiguous business cycles},
  author={Ilut, Cosmin L and Schneider, Martin},
  journal={The American Economic Review},
  volume={104},
  number={8},
  pages={2368--2399},
  year={2014},
  publisher={American Economic Association}
}

@article{brunnermeier2014macroeconomic,
  title={A macroeconomic model with a financial sector},
  author={Brunnermeier, Markus K and Sannikov, Yuliy},
  journal={The American Economic Review},
  volume={104},
  number={2},
  pages={379--421},
  year={2014},
  publisher={American Economic Association}
}

@article{kopecky2010finite,
  title={Finite state Markov-chain approximations to highly persistent processes},
  author={Kopecky, Karen A and Suen, Richard MH},
  journal={Review of Economic Dynamics},
  volume={13},
  number={3},
  pages={701--714},
  year={2010},
  publisher={Elsevier}
}

@article{yang2015intertemporal,
author = {Yang, Wei}, 
title = {Intertemporal Substitution and Equity Premium},
year = {2015}, 
doi = {10.1093/rof/rfv004}, 
URL = {http://rof.oxfordjournals.org/content/early/2015/03/21/rof.rfv004.abstract}, 
eprint = {http://rof.oxfordjournals.org/content/early/2015/03/21/rof.rfv004.full.pdf+html}, 
journal = {Review of Finance} 
}

@article{campbell1999force,
  title={By Force of Habit: A Consumption-Based Explanation of Aggregate Stock Market Behavior},
  author={Campbell, John Y and Cochrane, John H},
  journal={Journal of Political Economy},
  volume={107},
  number={2},
  pages={205--251},
  year={1999},
  publisher={JSTOR}
}

@article{ju2012ambiguity,
  title={Ambiguity, learning, and asset returns},
  author={Ju, Nengjiu and Miao, Jianjun},
  journal={Econometrica},
  volume={80},
  number={2},
  pages={559--591},
  year={2012},
  publisher={Wiley Online Library}
}

@article{barro1984time,
  title={Time-Separable Preferences and Intertemporal-Substitution Models of Business Cycles},
  author={Barro, Robert J and King, Robert G},
  journal={The Quarterly Journal of Economics},
  volume={99},
  number={4},
  pages={817--839},
  year={1984},
  publisher={Oxford University Press}
}

@article{dupor2014analytics,
  title={The analytics of technology news shocks},
  author={Dupor, Bill and Mehkari, M Saif},
  journal={Journal of Economic Theory},
  volume={153},
  pages={392--427},
  year={2014},
  publisher={Elsevier}
}

@article{malkhozov2014asset,
  title={Asset prices in affine real business cycle models},
  author={Malkhozov, Aytek},
  journal={Journal of Economic Dynamics and Control},
  volume={45},
  pages={180--193},
  year={2014},
  publisher={Elsevier}
}

@article{uhlig1999toolkit,
  title={A toolkit for analyzing nonlinear dynamic stochastic models easily},
	year = {1999},
	journal = {Computational Methods for the Study of Dynamic Economies},	
  author={Harald Uhlig},
  publisher={Oxford University Press, Oxford}
}

@article{backus2015risk,
  title={Risk and ambiguity in models of business cycles},
  author={Backus, David and Ferriere, Axelle and Zin, Stanley},
  journal={Journal of Monetary Economics},
  volume={69},
  pages={42--63},
  year={2015},
  publisher={Elsevier}
}

@article{uhlig2010easy,
  title={Easy ez in dsge},
  author={Uhlig, Harald},
  journal={Unpublished Manuscript, University of Chicago},
  year={2010}
}

@article{bansal2012empirical,
  title={An Empirical Evaluation of the Long-Run Risks Model for Asset Prices},
  author={Bansal, Ravi and Kiku, Dana and Yaron, Amir},
  journal={Critical Finance Review},
  volume={1},
  number={1},
  pages={183--221},
  year={2012},
  publisher={now publishers}
}

@article{schmitt2004solving,
  title={Solving dynamic general equilibrium models using a second-order approximation to the policy function},
  author={Schmitt-Grohe, Stephanie and Uribe, Mart{\i}n},
  journal={Journal of economic dynamics and control},
  volume={28},
  number={4},
  pages={755--775},
  year={2004},
  publisher={Elsevier}
}

@article{rudebusch2012bond,
  title={The bond premium in a dsge model with long-run real and nominal},
  author={Rudebusch, Glenn D and Swanson, Eric T},
  journal={American Economic Journal: Macroeconomics},
  volume={4},
  number={1},
  pages={105--143},
  year={2012},
  publisher={American Economic Association}
}

@article{lucas2003macroeconomic,
  title={Macroeconomic priorities},
  author={Lucas, Robert E},
  journal={American Economic Review},
  volume={93},
  number={1},
  pages={1--14},
  year={2003},
  publisher={Princeton, NJ: American Economic Association, 1911-}
}

@article{van2012term,
  title={The term structure of interest rates in a DSGE model with recursive preferences},
  author={Van Binsbergen, Jules H and Fern{\'a}ndez-Villaverde, Jes{\'u}s and Koijen, Ralph SJ and Rubio-Ramirez, Juan},
  journal={Journal of Monetary Economics},
  volume={59},
  number={7},
  pages={634--648},
  year={2012},
  publisher={Elsevier}
}

@article{caldara2012computing,
  title={Computing DSGE models with recursive preferences and stochastic volatility},
  author={Caldara, Dario and Fernandez-Villaverde, Jesus and Rubio-Ramirez, Juan F and Yao, Wen},
  journal={Review of Economic Dynamics},
  volume={15},
  number={2},
  pages={188--206},
  year={2012},
  publisher={Elsevier}
}

@article{belo2010production,
  title={Production-based measures of risk for asset pricing},
  author={Belo, Frederico},
  journal={Journal of Monetary Economics},
  volume={57},
  number={2},
  pages={146--163},
  year={2010},
  publisher={Elsevier}
}

@article{van2006learning,
  title={Learning asymmetries in real business cycles},
  author={Van Nieuwerburgh, Stijn and Veldkamp, Laura},
  journal={Journal of Monetary Economics},
  volume={53},
  number={4},
  pages={753--772},
  year={2006},
  publisher={Elsevier}
}

@article{bekaert2017asset,
  title={Asset return dynamics under habits and bad environment--good environment fundamentals},
  author={Bekaert, Geert and Engstrom, Eric},
  journal={Journal of Political Economy},
  volume={125},
  number={3},
  pages={713--760},
  year={2017},
  publisher={University of Chicago Press Chicago, IL}
}

@article{gourio2012disaster,
  title={Disaster Risk and Business Cycles},
  author={Gourio, Fran{\c{c}}ois},
  journal={The American Economic Review},
  volume={102},
  number={6},
  pages={2734--66},
  year={2012}
}

@article{gabaix2012variable,
  title={Variable rare disasters: An exactly solved framework for ten puzzles in macro-finance},
  author={Gabaix, Xavier},
  journal={The Quarterly Journal of Economics},
  volume={127},
  number={2},
  pages={645--700},
  year={2012},
  publisher={Oxford University Press}
}

@article{croce2014long,
title = "Long-run productivity risk: A new hope for production-based asset pricing? ",
journal = "Journal of Monetary Economics ",
volume = "66",
number = "0",
pages = "13 - 31",
year = "2014",
note = "",
issn = "0304-3932",
author = "Mariano Massimiliano Croce"
}

@article{abel1983optimal,
  title={Optimal investment under uncertainty},
  author={Abel, A.B.},
  journal={The American Economic Review},
  pages={228--233},
  year={1983},
  publisher={JSTOR}
}

@article{tallarini2000risk,
  title={Risk-sensitive real business cycles},
  author={Tallarini, Thomas D.},
  journal={Journal of Monetary Economics},
  volume={45},
  number={3},
  pages={507--532},
  year={2000},
  publisher={Elsevier}
}

@misc{rouwenhorst1995asset,
  title={Asset pricing implications of equilibrium business cycle models, chapter 10, 294 330},
  author={Rouwenhorst, K Geert},
  year={1995},
	journal={Frontiers of Business Cycle Research},
  publisher={Princeton University Press}
}

@article{hayashi1982tobin,
  title={Tobin's marginal q and average q: A neoclassical interpretation},
  author={Hayashi, F.},
  journal={Econometrica: Journal of the Econometric Society},
  pages={213--224},
  year={1982},
  publisher={JSTOR}
}

@article{epstein1989substitution,
  title={Substitution, risk aversion, and the temporal behavior of consumption and asset returns: A theoretical framework},
  author={Epstein, L.G. and Zin, S.E.},
  journal={Econometrica: Journal of the Econometric Society},
  pages={937--969},
  year={1989},
  publisher={JSTOR}
}

@article{wachter2013can,
  title={Can Time-Varying Risk of Rare Disasters Explain Aggregate Stock Market Volatility?},
  author={Wachter, Jessica A},
  journal={The Journal of Finance},
  volume={68},
  number={3},
  pages={987--1035},
  year={2013},
  publisher={Wiley Online Library}
}

@article{judd1992projection,
  title={Projection methods for solving aggregate growth models},
  author={Judd, K.L.},
  journal={Journal of Economic Theory},
  volume={58},
  number={2},
  pages={410--452},
  year={1992},
  publisher={Elsevier}
}

@misc{miranda2001applied,
  title={Applied computational economics},
  author={Miranda, M. and Fackler, P.},
  year={2001},
  publisher={MIT Press, Cambridge, MA}
}

@article{kaltenbrunner2010long,
  title={Long-run risk through consumption smoothing},
  author={Kaltenbrunner, G. and Lochstoer, L.A.},
  journal={Review of Financial Studies},
  volume={23},
  number={8},
  pages={3190--3224},
  year={2010},
  publisher={Soc Financial Studies}
}

@article{bloom2009impact,
  title={The Impact of Uncertainty Shocks},
  author={Bloom, N.},
  journal={Econometrica},
  volume={77},
  number={3},
  pages={623--685},
  year={2009},
  publisher={Wiley Online Library}
}

@article{bansal2005interpretable,
  title={Interpretable asset markets?},
  author={Bansal, Ravi and Khatchatrian, Varoujan and Yaron, Amir},
  journal={European Economic Review},
  volume={49},
  number={3},
  pages={531--560},
  year={2005},
  publisher={Elsevier}
}

@article{bansal2004risks,
  title={Risks for the long run: A potential resolution of asset pricing puzzles},
  author={Bansal, Ravi and Yaron, Amir},
  journal={The Journal of Finance},
  volume={59},
  number={4},
  pages={1481--1509},
  year={2004},
  publisher={Wiley Online Library}
}

@article{jermann1998asset,
  title={Asset pricing in production economies},
  author={Jermann, Urban J.},
  journal={Journal of Monetary Economics},
  volume={41},
  number={2},
  pages={257--275},
  year={1998},
  publisher={Elsevier}
}

@article{papanikolaou2011investment,
  title={Investment Shocks and Asset Prices},
  author={Papanikolaou, Dimitris},
  journal={Journal of Political Economy},
  volume={119},
  number={4},
  pages={639--685},
  year={2011},
  publisher={University of Chicago Press}
}

@article{guvenen2009parsimonious,
  title={{A parsimonious macroeconomic model for asset pricing}},
  author={Guvenen, Fatih},
  journal={Econometrica},
  volume={77},
  number={6},
  pages={1711--1750},
  issn={1468-0262},
  year={2009},
  publisher={Wiley Online Library}
}

\end{document}